\documentclass[10pt,twocolumn,twoside]{IEEEtran}
\usepackage{cite}
\usepackage{amsmath,amssymb,amsfonts}
\usepackage{algorithmic}
\usepackage{graphicx}
\usepackage{textcomp}
\def\BibTeX{{\rm B\kern-.05em{\sc i\kern-.025em b}\kern-.08em
    T\kern-.1667em\lower.7ex\hbox{E}\kern-.125emX}}
%
%
\hyphenation{op-tical net-works semi-conduc-tor}
\usepackage{color}
\definecolor{mygreen}{RGB}{10, 128, 128}
\definecolor{myspia}{RGB}{255, 64, 164}
\usepackage{hyperref}
\hypersetup{
	colorlinks   = true, 
	urlcolor     = black, 
	linkcolor    = mygreen, 
	citecolor   = myspia 
}

 \usepackage{amsmath}
 \usepackage{amssymb}
\usepackage{amsthm}
\usepackage{bbm}
\usepackage{dsfont}
\usepackage{makeidx}
\usepackage{bbold}
\usepackage{latexsym,remreset}
\usepackage{multirow}
 \usepackage{amsfonts}
\usepackage{pgfplots}
\usepackage{mathtools}
\usepackage{textcomp}
\usepackage{scrextend}
\usepackage{commath}
\usepackage{graphicx}
\usepackage{url}
\usepackage{enumerate}
\usepackage{caption}
\usepackage{subcaption}
\usepackage{esvect}

\newtheorem{theorem}{Theorem}
\newtheorem{lemma}{Lemma}
\newtheorem{proposition}{Proposition}
\newtheorem{corollary}{Corollary}
\newtheorem{definition}{Definition}
\newtheorem{remark}{Remark}
\newtheorem{assumption}{Assumption}

\def \ora{\overrightarrow}

\def \G {\mathcal{G}}

\begin{document}
%
%

\title{Stability of Multi-Microgrids: New Certificates, Distributed Control, and Braess's Paradox}
\author{Amin~Gholami,~\IEEEmembership{Student Member,~IEEE}~and~Xu~Andy~Sun,~\IEEEmembership{Senior Member,~IEEE}
\thanks{The authors are with the H. Milton Stewart School of Industrial and Systems Engineering, Georgia Institute of Technology, Atlanta, GA 30332 USA (e-mail: \texttt{a.gholami{@}gatech.edu}; \texttt{andy.sun{@}isye.gatech.edu}).}}

\maketitle
\begin{abstract}
This paper investigates the theory of resilience and stability in multi-microgrid (multi-$\mu$G) networks. We derive new sufficient conditions to guarantee small-signal stability of multi-$\mu$Gs in both lossless and lossy networks. The new stability certificate for lossy networks only requires local information, thus leads to a fully distributed control scheme. Moreover, we study the impact of network topology, interface parameters (virtual inertia and damping), and local measurements (voltage magnitude and reactive power) on the stability of the system. The proposed stability certificate suggests the existence of Braess’s Paradox in the stability of multi-$\mu$Gs, i.e. adding more connections between microgrids could worsen the multi-$\mu$G system stability as a whole. 
We also extend the presented analysis to structure-preserving network models, and provide a stability certificate as a function of original network parameters, instead of the Kron reduced network parameters.
We provide a detailed numerical study of the proposed certificate, the distributed control scheme, and a coordinated control approach with line switching. The simulation shows the effectiveness of the proposed stability conditions and control schemes in a four-$\mu$G network, IEEE 33-bus system, and several large-scale synthetic grids.
\end{abstract}


\IEEEpeerreviewmaketitle

\section{Introduction}
\IEEEPARstart{R}{estructuring} of distribution systems into multi-microgrids (multi-$\mu$Gs) is one of the main ways of improving the resilience of the electricity grid. The structural modularity of such networks makes them remarkably resilient against extreme events, but inherently prone to instabilities nonetheless. A minor contingency in these networks may lead to cascading outages and a total blackout in all microgrids. There is, therefore, an urgent need for understanding the notion of stability in multi-$\mu$Gs. The present paper is motivated by this urgent need and is aimed at characterizing the conditions under which a multi-$\mu$G is locally stable. We also attempt to understand how the topology and parameters of the network would affect the stability of a multi-$\mu$G, and how we can monitor and guarantee its stability using a distributed control scheme.



A key feature that distinguishes future multi-$\mu$G networks from the conventional distribution systems is that each microgrid will be connected to the rest of the system via a point of common coupling (PCC). Moreover, each microgrid either has a 
voltage source inverter (VSI)-based interface at PCC or is composed of a network of 
distributed energy resources (DERs), e.g. VSIs, diesel generators (DGs), etc
 \cite{2016-Xie-multi-microgrid, 2019-gholami-sun-multi-microgrid}.
On the other hand, it can be mathematically proved (see Lemma \ref{lemma: VSC model reparametrization}) that the frequency dynamics of a  droop-controlled VSI is equivalent to the dynamics of a synchronous generator or DG, represented by swing equations \cite{2013-schiffer-synchronization}, \cite{2010-zhong-synchronverters}.
Therefore, from a modeling perspective, the dynamical model of multi-$\mu$Gs is closely related to that of interconnected generators \cite{Vu-2019-reconfig-microgrids}, and analysis of multi-$\mu$Gs' behaviour is intertwined with an accurate understanding of swing equations.

Swing equations with zero transfer conductances (the so-called \emph{lossless} model) have been studied in the 1980s (see e.g. \cite{1988-Zaborszky-phase-portrait}).
%
Various methods in the broad category of the so-called direct methods have been developed to estimate the region of attraction of equilibrium points (EPs) and conduct transient stability analysis \cite{2011-Chiang-book-direct-methods, 1985-Varaiya-direct-methods}. Unfortunately, the existing direct methods are mostly cumbersome and limited to lossless systems. In a similar vein, the method in \cite{2013-anghel-algorithmic} algorithmically constructs Lyapunov functions for swing equations using semidefinite programming and sum of squares decomposition. 
Nevertheless, there are serious numerical difficulties with applying such methods to large power systems. More recently, some of these numerical problems are partially overcome by system decomposition \cite{2017-anghel-decomposition}, or by showing that there exists a convex set of Lyapunov functions certifying the transient stability of a given power system, and then finding the best suited Lyapunov function in the family to a specific contingency \cite{2016-Vu-framework}.

Swing equations with nonzero transfer conductances (the so-called \emph{lossy} model) are more challenging to analyze. This is partly because there is no global energy function for such systems \cite{1989-Chiang-energy-functions-lossy}, and therefore, some basic techniques (e.g., the energy function method) for studying these equations cannot be directly applied. Nonetheless, several workarounds are developed over the years \cite{2013-anghel-algorithmic, 2017-anghel-decomposition, 1984-narasimhamurthi-lossy-energy-function, 1985-kwatny-energy-like-lyapunov, 1989-pota-lossy-lyapunov, 1979-Athay-practical-method, 2005-silva-smooth-lyapunov, 2019-josz-occupation-measures}.
For instance, the method in \cite{1989-pota-lossy-lyapunov} constructs Lyapunov functions for lossy swing equations using dissipative systems theory. However, the method still needs to be highly tuned to converge and has limitations on the bus angle differences. Athay \emph{et al.} in \cite{1979-Athay-practical-method} propose a method to estimate critical clearing times associated with first swing transient instability without explicitly solving any differential equations. The method was not supported by a sound theory at the time, but later in \cite{2005-silva-smooth-lyapunov}, a theory is also developed to support the method using the ideas of extended invariance principle.
%
%

%
In \cite{2005-ortega-nontrivial-transfer-conductances2}, the lossy swing equation model is extended by considering the dynamics of the excitation system, and the asymptotic stability of EPs is ensured by proving the existence of Lyapunov functions and designing a nonlinear feedback control for the generator excitation field. The approach in \cite{2005-ortega-nontrivial-transfer-conductances2} is based on the assumption that the line transfer conductances are sufficiently small, and the angle differences at EPs are also small.
Our approach in this paper does not require these two assumptions and is aimed at finding a real-time certificate for the local stability (as opposed to global stability in \cite{2005-ortega-nontrivial-transfer-conductances2}) of EPs.

%
%
In \cite{1980-Skar-stability-thesis}, the local stability of swing equations with nontrivial transfer conductances is examined by linearization, and conditions for stability of EPs are established. It is also shown that undamped swing equations can be stable only under very special circumstances. Contrary to \cite{1980-Skar-stability-thesis}, the stability certificate for lossy networks in the present paper can be evaluated using only local measurements, does not require a reference bus, and reveals new relationships between network topology and stability of EPs.

Swing equations can also be studied from a graph-theoretic perspective, where the main focus is on investigating the relationship between the underlying graph structure of the power system and the system stability \cite{2015-roy-graph-theoretic, 2016-koorehdavoudi-input-output,2019-ebrahimzadeh-impact, 2018-ishizaki-graph, 2018-Dorfler-algebraic-graph-theory} . Our work in this paper falls into this research category. We refer to \cite{2018-ishizaki-graph} and \cite{2018-Dorfler-algebraic-graph-theory} for a comprehensive survey on this topic. In particular, the existing results on the small-signal stability of lossless swing equations are reviewed and studied in \cite{2018-ishizaki-graph}. It is shown that if bus angle differences at an EP are less than $\pi/2$, then the EP is locally asymptotically stable. The present paper provides a generalization of such results to lossy swing equations. Contrary to the lossless case, we will show that an EP in lossy networks could be unstable even if bus angle differences are less then $\pi/2$. 

%
%

Swing equations also play an important role in studying droop-controlled inverters in microgrids. 
In the literature, various models with different complexities have been adopted for droop-controlled inverters, including first-order models \cite{2013-Simpson-synchronization}, second-order models \cite{2016-Xie-multi-microgrid, 2013-schiffer-synchronization, Vu-2019-reconfig-microgrids}, third-order models \cite{2014-Schiffer-conditions}, and higher-order models  \cite{2017-vorobev-framework, 2018-Vorobev-Conference-Plug, 2019-Vorobev-Plug}.
Each model is useful for studying a particular aspect of droop-controlled inverters such as their frequency stability, voltage stability, or electromagnetic transients. The application of swing equations is more common in second-order models and frequency stability \cite{2013-schiffer-synchronization, Vu-2019-reconfig-microgrids}.
Swing equations with variable voltage magnitudes also appear in third-order models. For instance, in \cite{2014-Schiffer-conditions}, each inverter is modeled by a third-order differential equation including swing equations with variable voltage magnitudes. Using this model, sufficient conditions are derived for boundedness of trajectories in lossy microgrids as well as asymptotic stability of EPs in lossless microgrids. In the present paper, we focus on frequency stability and adopt a second-order model with constant voltage magnitudes for each inverter. In comparison with \cite{2014-Schiffer-conditions}, in the lossless case, our results match the results of \cite[See Remark 5.11]{2014-Schiffer-conditions}. In the lossy case, our sufficient condition in this paper certifies the asymptotic stability of EPs instead of boundedness of trajectories as in \cite{2014-Schiffer-conditions}. Nonetheless, our model for inverters here is different, and a direct comparison seems unfair.

The framework in \cite{2017-vorobev-framework} (and the follow-up articles \cite{2018-Vorobev-Conference-Plug, 2019-Vorobev-Plug}) utilizes a more detailed dynamical model for inverter-based microgrids, modeling the droop-based frequency and voltage controls as well as the electromagnetic transients of power lines. After performing a  model order reduction and constructing a Lyapunov function for the reduced model, a set of decentralized sufficient conditions are developed for guaranteeing the small-signal stability of the EPs. In the present paper, we pursue the same goal as in \cite{2017-vorobev-framework, 2018-Vorobev-Conference-Plug, 2019-Vorobev-Plug}, i.e., finding decentralized sufficient conditions for small-signal stability. However, our focus is more on frequency stability, and deriving more explicit stability conditions that reveal the role of network topology and parameters in small-signal stability.    

The small-signal stability of multi-$\mu$Gs is studied in \cite{2016-Xie-multi-microgrid}, where various control frameworks are proposed for the microgrids' interface. Moreover, a plug-and-play rule is proposed in \cite{Vu-2019-reconfig-microgrids}, guaranteeing the stability of multi-$\mu$Gs without requiring the global knowledge of network
topology or operating conditions.

The multi-$\mu$G model in the present paper is similar to the one in \cite{Vu-2019-reconfig-microgrids}, except we do not incorporate a local integral control. Corollary \ref{coro: paradox} in the present paper matches the plug-and-play rule in \cite{Vu-2019-reconfig-microgrids}. Moreover, the main result in Theorem \ref{thm: stability properties} generalizes the main result of \cite{Vu-2019-reconfig-microgrids} because our stability certificate considers the real-time operating condition of the system and is less conservative. Our result is also a generalization of the result in \cite{2013-schiffer-synchronization} as we do not require uniform damping of inverters.

%

Another set of literature that are conceptually related to our work are the recent studies on power grid synchronization \cite{2019-paganini-global-analysis-synchronization}, frequency control \cite{2014-Low-NaLi-stability, 2018-Dorfler-robust}, voltage stability \cite{2020-ortega-tool-analysis}, and also the study of Kuramoto oscillators which has been linked to the stability analysis of lossy power systems with strongly overdamped generators \cite{2012-Dorfler-Kuramoto}.

This paper is a significant extension to our work presented in \cite{2020-fast-certificate} on a certificate for local stability of lossy swing equations. 
Compared to \cite{2020-fast-certificate}, here we further extend the certificate as a function of network physical measurements, analyze and provide a proof for the lossless case, 
generalize the presented analysis to structure-preserving network models,
discuss the physical interpretation and Braess's Paradox behind it, establish a distributed control scheme based on it, and finally show its application in multi-$\mu$G networks.
The main contributions of the present paper can be summarized below.
\begin{itemize}
     \item \textbf{Stability Certificates:} We derive explicit sufficient conditions that certify small-signal stability of multi-$\mu$Gs for both lossless and lossy networks. The new certificates provide significant insights about the interplay between system stability and reactive power absorption, voltage magnitude at PCC, network topology, and interface parameters of each microgrid. We also introduce a new weighted directed graph to study the spectral properties of the multi-$\mu$G Laplacian.

     \item \textbf{Distributed Control:} In addition to providing new insights into the theory of stability, the derived stability certificates use only local information and are suitable for real-time monitoring and fast stability assessment. Based on the developed theory, we introduce a fully distributed control scheme to adjust the dynamic parameters of each microgrid interface for maintaining the stability of the system.

     \item \textbf{Analog of Braess's Paradox:} 
     The stability conditions developed in this paper surprisingly reveal an analog of Braess’s Paradox in power system stability, showing that adding power lines to the system may decrease the stability margin \cite{2005Braess}.
     The current paper rigorously establishes the impact of switching-off lines, increasing damping, and decreasing virtual inertia on improving system stability. 
     \item \textbf{Generalization to Structure-Preserving Models:} We extend the presented analysis to structure-preserving power network models. Specifically, we prove a monotonic relationship between entries of a nodal admittance matrix and its Kron reduced counterpart. This monotonic relationship enables us to derive a stability condition as a function of original network parameters, instead of the Kron reduced network parameters. This is beneficial to real-time distributed control as the network parameters constantly change and Kron reduction may not be available to individual controllers.
     %
     
\end{itemize}

We believe the findings in this paper are also applicable to other problems whose models display similar structural properties, such as small-signal stability assessment in the transmission level and synchronization of coupled second-order nonlinear oscillators. The rest of our paper is organized as follows. Section \ref{sec: Background} provides a brief background on multi-$\mu$Gs. In Section \ref{Sec: Linearization and Spectrum of Jacobian}, the multi-$\mu$G model is linearized
and several properties of the Jacobian matrix are proved. 
Section \ref{Sec: Stability and Hyperbolicity of the Equilibrium Points} is devoted to the main results on sufficient conditions for the stability of multi-$\mu$Gs. Section \ref{Sec: Computational Experiments} further illustrates the developed analytical results through numerical examples, and finally, the paper concludes with Section \ref{Sec: Conclusions}.
\section{Background} \label{sec: Background}
\subsection{Notations}
We use $\mathbb{C_{-}}$ to denote the set of complex numbers with negative real part, and $\mathbb{C}_{0}$ to denote the set of complex numbers with zero real part. $j=\sqrt{-1}$ is the imaginary unit. The spectrum of a matrix $A\in\mathbb{R}^{n\times n}$ is denoted by $\sigma(A)$.
For $A\in\mathbb{C}^{n\times n}$ and $\alpha,\beta \subseteq\{1,...,n\}$, the submatrix of entries in the rows indexed by $\alpha$ and
columns indexed by $\beta$ is denoted by $A[\alpha, \beta]$. Similarly, for a vector $x\in\mathbb{C}^{n}$, $x[\alpha]$ denotes the subvector consisting of entries indexed by $\alpha$.

\subsection{Multi-Microgrid Model} \label{subsec: Multi-Machine Swing Equations}
Consider a distribution network represented as an undirected graph $\mathcal{G}=(\mathcal{N},\mathcal{E})$, where $\mathcal{N}$ is the set of nodes and $\mathcal{E}$ is the set of edges. Each node in $\mathcal{N}$ represents a microgrid and each edge in $\mathcal{E}$ represents an electrical branch connecting the two microgrids across the branch. We will refer to $\mathcal{G}$ as the linking grid \cite{2019-gholami-sun-multi-microgrid}.
The linking grid $\mathcal{G}$ is \emph{connected} if 
     for any two nodes $i,k \in \mathcal{N}, i\ne k$ there exists a path between $i$ and $k$ consisting of power lines with nonzero admittance.
     
To begin the study, let us assume that each microgrid is modeled by a grid-forming VSI connected to a node of the linking grid.
%
%
%
Given the time window of small-signal stability assessment, 
characterization of each microgrid by a VSI can be understood in two ways:
\begin{enumerate}
    \item The first possibility is that a microgrid contains an ensemble of devices (e.g., grid-forming inverters, diesel generators, and loads) whose aggregate behavior can be modeled by a VSI. The derivation of the aggregated VSI model is out of the scope of this paper. We refer to \cite{2019-Johnson-aggregate, 2018-Johnson-aggregate} for details. Moreover, we restrict the type of microgrid DERs to grid-forming VSIs, DGs, and more generally to those whose dynamics can be captured by swing equations.
    \label{way1}
    \item The second possibility is that a microgrid is connected to the linking grid through a grid-forming VSI at PCC \cite{2016-Xie-multi-microgrid, 2019-gholami-sun-multi-microgrid, Vu-2019-reconfig-microgrids}. VSI-based interfaces decouple the intra-microgrid dynamics from the grid side, and consequently, the interactions among different microgrids will be primarily determined by the VSI control law \cite{2016-Xie-multi-microgrid, 2019-gholami-sun-multi-microgrid}.
    \label{way2}
\end{enumerate}
When the model order reduction in way \ref{way1} introduces major errors, or VSI interfaces in \ref{way2} do not exist, it is inevitable that the internal structure, DERs, and loads of the microgrid be explicitly modeled.
Later in Section \ref{subsec: structure-preserving microgrid}, we will introduce a way to consider a structure-preserving model for each microgrid and extend our stability analysis to such cases.
Let us for now focus on the case where each microgrid is represented as a node in the linking grid $\mathcal{G}=(\mathcal{N},\mathcal{E})$ and modeled by a VSI.

 Accordingly, the dynamics of a multi-$\mu$G network is characterized by the following system of nonlinear autonomous ordinary differential equations (ODEs):
\begin{subequations} \label{eq: swing equations}
	\begin{align}
	& \dot{\delta}_i(t) = \omega_i(t) && \forall i \in \mathcal{N},  \label{eq: swing equations a}\\
	& m_i \dot{\omega}_i(t)+ d_i \omega_i(t) = P_{s_i} - P_{e_i}(\delta(t)) && \forall i \in \mathcal{N}, \label{eq: swing equations b}
	\end{align}
\end{subequations}
where for each microgrid $i\in\mathcal{N}$, $P_{s_i}$ is the active power setpoint in per unit, $P_{e_i}$ is the outgoing active power flow in per unit, $m_i$ is the virtual inertia in seconds induced by the delay in droop control, $d_i$ is the unitless damping coefficient, $t$ is the time in seconds, $\delta_i(t)$ is the terminal voltage angle in radian, and finally $\omega_i(t)$ is the deviation of the angular frequency from the nominal angular frequency in radian per seconds. For the sake of simplicity, henceforth we do not explicitly write the dependence of the state variables $\delta$ and $\omega$ on time $t$. 


The PCC of two microgrids $i$ and $k$ are connected via a power line with the admittance $y_{ik}=g_{ik}+jb_{ik}$, where $g_{ik}\ge0$ and $b_{ik}\le0$. In transmission-level small-signal stability studies, the conductance $g_{ik}$ of transmission lines is commonly assumed to be zero (aka lossless model). While this is a reasonable assumption in the transmission level, it may not hold in the distribution level and multi-$\mu$G networks. Therefore, our analysis will be based on the general lossy case, and we discuss the lossless model as a special case. Let $y_{ii}$ denote the admittance-to-ground at PCC $i$ and define the symmetric admittance matrix given by the diagonal elements $Y_{ii}\measuredangle \theta_{ii}=\sum_{k=1}^n y_{ik}$ and off-diagonal elements $Y_{ik}\measuredangle \theta_{ik}=-y_{ik}$. Based on this definition, the function $P_{e_i}$ in \eqref{eq: swing equations b} can be further spelled out: 
\begin{align} \label{eq: flow function}
	P_{e_i}(\delta) & = \sum \limits_{k = 1}^n { V_i  V_k Y_{ik} \cos \left( \theta _{ik} - \delta _i + \delta _k \right)},
\end{align}
where $V_i$ is the PCC terminal voltage magnitude of microgrid $i$.

\begin{definition}[flow function] \label{def: flow function}
The smooth function $P_e:\mathbb{R}^n \to \mathbb{R}^n$ given by $\delta \mapsto P_e(\delta)$ in \eqref{eq: flow function} is called the flow function.
\end{definition}
The smoothness of the flow function (it is $\mathcal{C}^\infty$ indeed) is a sufficient condition for the existence and uniqueness of the solution to the ODE \eqref{eq: swing equations}. 
The flow function is translationally invariant with respect to the operator $\delta \mapsto \delta + \alpha \mathbf{1}$, where $\alpha \in \mathbb{R}$ and $\mathbf{1}\in\mathbb{R}^n$ is the vector of all ones. In other words, $P_e(\delta + \alpha \mathbf{1})=P_e(\delta)$. A common way to deal with this situation is to define a reference bus and refer all other bus angles to it. This is equivalent to projecting the original state space onto a lower dimensional space. 

Observe that EPs of the multi-$\mu$G dynamical system \eqref{eq: swing equations} are of the form $(\delta^*,\omega^*)\in\mathbb{R}^{2n}$ where $\delta^*$ is a solution to the active power flow problem $P_{e_i}(\delta^*) = P_{s_i}, \forall i \in \mathcal{N}$ and $\omega^*=0$. We seek an answer to the following question: under what conditions is an EP $(\delta^*,\omega^*)$ locally asymptotically stable? A perfect answer to this question should give us a purely algebraic condition, shedding light on the relation between the stability of the EP and the parameters of system \eqref{eq: swing equations} (i.e., the interface parameters $m_i$ and $d_i$, the setpoints $P_{s_i}$, and the underlying graph of the multi-$\mu$G network). The rest of this paper is devoted to finding such an answer.

As mentioned before, model \eqref{eq: swing equations} is identical to the well-known swing equation model describing the dynamics of interconnected synchronous generators \cite{2008-anderson-stability}, and this is because VSI control schemes are widely devised to emulate the behavior of synchronous machines \cite{2016-Xie-multi-microgrid,2019-gholami-sun-multi-microgrid, 2010-zhong-synchronverters}.
Indeed, the equivalence of the dynamics of synchronous generators and droop-controlled VSIs can be rigorously formalized.
Specifically, a droop-controlled VSI at node $i \in \mathcal{N}$ can be modeled as \cite{2013-schiffer-synchronization}:
\begin{subequations} \label{eq: VSC}
	\begin{align}
	& \dot{\delta}_i(t) = -k_i\left( P_{m_i}(t) - P_{d_i} \right) ,  \label{eq: VSC a}\\
	& \tau_i \dot{P}_{m_i}(t) = -P_{m_i}(t) + P_{e_i}, \label{eq: VSC b}
	\end{align}
\end{subequations}
where $k_i\ge0$ is the droop gain, $P_{m_i}\in\mathbb{R}$ is the measured active power, $P_{d_i}\in\mathbb{R}$ is the desired active power setpoint, and $\tau_i\ge0$ is the time constant of the low-pass filter of the power measurement.
%
Now, the next lemma shows the droop-controlled VSI model \eqref{eq: VSC} can be reparametrized as the swing equation model \eqref{eq: swing equations}.
%
\begin{lemma}[VSI model reparametrization] \label{lemma: VSC model reparametrization}
The dynamics of the droop-controlled VSI model \eqref{eq: VSC} is equivalent to the dynamics of the swing equation model \eqref{eq: swing equations}.
\end{lemma}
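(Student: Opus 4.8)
The plan is to eliminate the measured--power variable $P_{m_i}$ from the first-order system \eqref{eq: VSC} and collapse it into a single second-order ODE in the angle $\delta_i$, which I then match term-by-term against \eqref{eq: swing equations}. First I would introduce the frequency deviation through the natural identification $\omega_i := \dot{\delta}_i$. By \eqref{eq: VSC a} this means
\begin{equation} \label{eq: omega-def-proposal}
\omega_i = -k_i\left( P_{m_i} - P_{d_i}\right),
\end{equation}
so that the first swing equation \eqref{eq: swing equations a} holds by construction. Solving \eqref{eq: omega-def-proposal} for the measured power gives $P_{m_i} = P_{d_i} - \omega_i/k_i$, valid whenever $k_i>0$.

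Next I would differentiate this expression in time to obtain $\dot{P}_{m_i} = -\dot{\omega}_i/k_i$, and substitute both $P_{m_i}$ and $\dot{P}_{m_i}$ into the filter equation \eqref{eq: VSC b}. After clearing the filter constant and multiplying through by $-k_i$, this yields $\tau_i\,\ddot{\delta}_i + \dot{\delta}_i = k_i\bigl( P_{d_i} - P_{e_i}\bigr)$. Dividing by $k_i$ and rewriting in terms of $\omega_i=\dot{\delta}_i$ gives
\begin{equation} \label{eq: reduced-proposal}
\frac{\tau_i}{k_i}\,\dot{\omega}_i + \frac{1}{k_i}\,\omega_i = P_{d_i} - P_{e_i},
\end{equation}
which is exactly the form of \eqref{eq: swing equations b} under the parameter identification $m_i = \tau_i/k_i$, $d_i = 1/k_i$, and $P_{s_i} = P_{d_i}$. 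Note that the flow function $P_{e_i}$ is the same object in both models, so no assumption on the network or on the angles is needed here.

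To argue \emph{equivalence} rather than a one-way reduction, I would check that this parameter correspondence is invertible on the physically meaningful domain. Given swing parameters with $d_i>0$ and $m_i\ge 0$, set $k_i = 1/d_i>0$, $\tau_i = m_i/d_i\ge 0$, and $P_{d_i}=P_{s_i}$; these satisfy the nonnegativity constraints $k_i,\tau_i\ge 0$ required in \eqref{eq: VSC}, and the reduction above recovers the original swing equation. Conversely, when $k_i>0$ the VSI model reduces via \eqref{eq: reduced-proposal} to a swing equation with $m_i,d_i>0$. Thus the two state spaces are linked by the smooth, invertible change of coordinates $(\delta_i,P_{m_i})\leftrightarrow(\delta_i,\omega_i)$ from \eqref{eq: omega-def-proposal}, under which trajectories of one system map bijectively onto trajectories of the other.

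The derivation itself is elementary differentiation and bookkeeping; the only point demanding care is the bidirectional parameter identification together with the positivity conditions ($k_i,\tau_i\ge 0$ versus $m_i,d_i\ge 0$) that make the coordinate change a genuine bijection and exclude the degenerate case $k_i=0$, for which $\dot{\delta}_i\equiv 0$. I would therefore state these domain restrictions explicitly so that the word ``equivalent'' in the lemma is unambiguous, and I expect this to be the main (though minor) obstacle rather than any analytic difficulty.
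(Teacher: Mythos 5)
Your proposal is correct and follows essentially the same route as the paper: introduce $\omega_i := \dot{\delta}_i = -k_i(P_{m_i}-P_{d_i})$, eliminate $P_{m_i}$ from the filter equation, and identify $m_i = \tau_i/k_i$, $d_i = 1/k_i$, $P_{s_i} = P_{d_i}$. Your additional remarks on the invertibility of the parameter map and the exclusion of the degenerate case $k_i = 0$ are a reasonable tightening of the word ``equivalent,'' which the paper leaves implicit (it assumes $D$ and $M$ nonsingular elsewhere), but they do not change the substance of the argument.
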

\begin{proof}
Consider the VSI model \eqref{eq: VSC} and define the new variable $\omega_i(t)$ as
\begin{align}
 \omega_i(t):= \dot{\delta}_i(t) = -k_i\left( P_{m_i}(t) - P_{d_i} \right).  
\end{align}
Thus, using the new variable $\omega_i(t)$, equation \eqref{eq: VSC a} can be written as \eqref{eq: swing equations a}. By substituting $P_{m_i}(t) = -\omega_i(t)/k_i + P_{d_i} $ into \eqref{eq: VSC b}, we get
\begin{align*}
    - \tau_i \dot{\omega}_i(t)/k_i  = \omega_i(t)/k_i - P_{d_i} + P_{e_i}.
\end{align*}
Now, for each node $i \in \mathcal{N}$, define the virtual inertia coefficient $m_i:={\tau_i}/{k_i}$, virtual damping $d_i:=1/k_i$, and active power setpoint $P_{s_i}:=P_{d_i}$. Therefore, \eqref{eq: VSC b} is equivalent to \eqref{eq: swing equations b}.
\end{proof}
Similar derivations can be found for example in \cite{2013-schiffer-synchronization, 2020-Guerrero-VSG}.
%
Accordingly, the application of model \eqref{eq: swing equations} is not restricted to the characterization of interconnected microgrids. The model (and consequently, the results developed in this paper) can be applied to a system of interconnected synchronous machines, coupled oscillators, etc.
\section{Linearization and Spectrum of Jacobian} \label{Sec: Linearization and Spectrum of Jacobian}
\subsection{Linearization}
Let us take the state variable vector $(\delta,\omega)\in\mathbb{R}^{2n}$ into account and note that the first step in studying the stability of multi-$\mu$G EPs is to analyze the Jacobian of the vector field in \eqref{eq: swing equations}: 
\begin{align}\label{eq: J}
J := \begin{bmatrix}
0 & I \\
-M^{-1} L  & - M^{-1}D \\
\end{bmatrix} \in\mathbb{R}^{2n\times 2n}
\end{align}
where $I\in\mathbb{R}^{n\times n}$ is the identity matrix, $M=   \mathbf{diag}(m_1,\cdots,m_n)$, and $D=\mathbf{diag}(d_1,\cdots,d_n)$. Throughout the paper, we assume $M$ and $D$ are nonsingular. Moreover, $L\in\mathbb{R}^{n\times n}$ is the Jacobian of the flow function with the entries:
\begin{subequations}\label{eq: Jacobian}
\begin{align} 
& L_{ii}= \sum \limits_{k=1, k \ne i}^n { V_i V_k Y_{ik} \sin \left( {\theta _{ik} - {\delta _i} + {\delta _k}} \right) }, \forall i \in \mathcal{N} \label{eq: Jacobian1}\\
&L_{ik}=  - {V_i} {V_k}{Y_{ik}}\sin \left( {{\theta _{ik}} - {\delta _i} + {\delta _k}} \right),\forall i\neq k \in \mathcal{N} \label{eq: Jacobian2}.
\end{align}
\end{subequations}
The matrix $L$ plays a prominent role in the spectrum of the Jacobian matrix $J$ (and as a consequence, in the stability properties of the EPs of multi-$\mu$Gs). We illustrate this role in the following subsection.

\subsection{Spectral Relationship Between Matrices $J$ and $L$}
The next lemma shows that the eigenvalues of $J$ and $L$ are linked through a singularity constraint. 
Recall that for $n\times n$ real matrices $Q_0,Q_1,$ and $Q_2$, a \emph{quadratic matrix pencil} is a matrix-valued function $P:\mathbb{C}\to\mathbb{R}^{n \times n}$ given by $\lambda \mapsto P(\lambda)$ such that  $P(\lambda) = \lambda^2Q_2 + \lambda Q_1 + Q_0$.
\begin{lemma}\label{lemma: relation between ev J and ev J11}
	$\lambda$ is an eigenvalue of $J$ if and only if the quadratic matrix pencil $P(\lambda):= \lambda^2 M + \lambda D + L$ is singular.
\end{lemma}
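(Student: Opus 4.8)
The plan is to work directly from the definition of an eigenvalue via the existence of a nonzero eigenvector, since this handles every $\lambda$ (including $\lambda=0$) in a single uniform argument and converts the block structure of $J$ into the pencil $P(\lambda)$ by block elimination. First I would write a candidate eigenvector of $J$ in block form as $(u^\top, v^\top)^\top \in \mathbb{C}^{2n}$ with $u,v \in \mathbb{C}^n$, and expand the eigenvalue equation $J (u^\top, v^\top)^\top = \lambda (u^\top, v^\top)^\top$ using the block form \eqref{eq: J}. The top block row gives the relation $v = \lambda u$, while the bottom block row gives $-M^{-1} L u - M^{-1} D v = \lambda v$. Substituting $v = \lambda u$ and multiplying through by the nonsingular matrix $M$, the bottom equation becomes $(\lambda^2 M + \lambda D + L)\,u = 0$, i.e. $P(\lambda)\, u = 0$.

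For the forward direction, suppose $\lambda \in \sigma(J)$, so there exists a nonzero $(u^\top, v^\top)^\top$ satisfying the two relations above. I would observe that $u$ cannot vanish: if $u = 0$, then $v = \lambda u = 0$, contradicting the assumption that the eigenvector is nonzero. Hence $u \neq 0$, and since $P(\lambda)\, u = 0$, the matrix $P(\lambda)$ has a nontrivial kernel and is therefore singular. For the converse, if $P(\lambda)$ is singular I would pick $u \neq 0$ with $P(\lambda)\, u = 0$ and define $v := \lambda u$; reversing the elimination shows that $(u^\top, v^\top)^\top$ is nonzero (because $u \neq 0$) and satisfies $J (u^\top, v^\top)^\top = \lambda (u^\top, v^\top)^\top$, so $\lambda \in \sigma(J)$.

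The only point that needs care is the step multiplying the bottom block equation by $M$, but this is harmless precisely because $M$ is assumed nonsingular, so the operation is invertible and introduces or removes no solutions. For this reason I do not expect a genuine obstacle; the argument is essentially the standard linearization/companion-form correspondence for a quadratic matrix pencil, with the one place to be vigilant being the derivation that $u \neq 0$ in the forward direction.

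As an alternative I would mention the determinant route: for $\lambda \neq 0$, taking the Schur complement of the invertible $(1,1)$ block $\lambda I$ in $\lambda I_{2n} - J$ yields the identity $\det(\lambda I_{2n} - J) = \det(M^{-1})\,\det\!\big(P(\lambda)\big)$, and since both sides are polynomials in $\lambda$, this extends to $\lambda = 0$ by continuity; as $\det(M^{-1}) \neq 0$, the zero sets coincide. Here the main obstacle is exactly the non-invertibility of the $(1,1)$ block at $\lambda = 0$, which forces the extra continuity argument—a complication the eigenvector proof avoids entirely, which is why I would present the eigenvector argument as the primary proof.
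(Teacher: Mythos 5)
Your proof is correct: the block elimination $v=\lambda u$, the reduction of the second block row to $P(\lambda)u=0$ after multiplying by the nonsingular $M$, and the observation that $u=0$ forces $v=0$ together give both directions cleanly, and the determinant/Schur-complement identity $\det(\lambda I_{2n}-J)=\det(M^{-1})\det(P(\lambda))$ you mention as an alternative is also valid (extended to $\lambda=0$ by polynomial identity). The paper itself does not reproduce a proof of this lemma---it defers to the cited reference \cite{2020-fast-certificate}---but your eigenvector argument is the standard linearization correspondence for quadratic pencils and is exactly the kind of argument that reference supplies, so there is nothing to correct.
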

%

The proof of Lemma \ref{lemma: relation between ev J and ev J11} is given in \cite{2020-fast-certificate}.
Next, Proposition \ref{prop: geometric nullity of J and J11} illustrates the relationship between the kernels and the multiplicity of the zero eigenvalue of the two matrices $J$ and $L$.
\begin{proposition} \label{prop: geometric nullity of J and J11}
Consider the Jacobian matrix $J$ in \eqref{eq: J}. The following statements hold:
\begin{enumerate}[(i)]
    \item The kernel of $L$ is the orthogonal projection of the kernel of $J$ onto the linear subspace $\mathbb{R}^n \times \{0\}$. That is, $\mathbf{ker}(L)=  \mathbf{proj} (\mathbf{ker}(J))$. 
    
    \item  The geometric multiplicity of the zero eigenvalue in $\sigma(J)$ and $\sigma(L)$ are equal.
    
    \item $J$ is nonsingular if and only if $L$ is nonsingular.
\end{enumerate}
\end{proposition}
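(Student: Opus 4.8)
The plan is to read off all three claims from an explicit description of $\mathbf{ker}(J)$ obtained directly from the block structure of $J$. First I would take a generic vector $(x,y)$ with $x,y\in\mathbb{R}^n$ and impose $J\,(x,y)=0$. The top block row of \eqref{eq: J} gives $y=0$, and substituting this into the bottom block row yields $-M^{-1}Lx=0$. Since $M$ is assumed nonsingular throughout, this is equivalent to $Lx=0$. Hence $\mathbf{ker}(J)=\{(x,0):x\in\mathbf{ker}(L)\}=\mathbf{ker}(L)\times\{0\}$, and the linear map $x\mapsto(x,0)$ is an isomorphism from $\mathbf{ker}(L)$ onto $\mathbf{ker}(J)$.

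For part (i), the orthogonal projection onto $\mathbb{R}^n\times\{0\}$ fixes every vector of the form $(x,0)$, so $\mathbf{proj}(\mathbf{ker}(J))=\mathbf{ker}(L)\times\{0\}$, which under the natural identification of $\mathbb{R}^n\times\{0\}$ with $\mathbb{R}^n$ is exactly $\mathbf{ker}(L)$. For part (ii), I would use that the geometric multiplicity of an eigenvalue equals the dimension of its eigenspace; the isomorphism above gives $\dim\mathbf{ker}(J)=\dim\mathbf{ker}(L)$, which is precisely the equality of the geometric multiplicities of the zero eigenvalue in $\sigma(J)$ and $\sigma(L)$.

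For part (iii) there are two equally short routes. Directly, $J$ is nonsingular iff $\mathbf{ker}(J)=\{0\}$ iff $\mathbf{ker}(L)=\{0\}$ iff $L$ is nonsingular, using the kernel identity above. Alternatively, one can invoke Lemma \ref{lemma: relation between ev J and ev J11} at $\lambda=0$: then $0\in\sigma(J)$ iff the pencil $P(0)=L$ is singular, and contraposition yields the claim. I would likely present the direct route, since it reuses the computation from part (i) and keeps the proposition self-contained.

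I do not anticipate a genuine obstacle, as the whole argument rests only on the nonsingularity of $M$ and an elementary block computation. The two points deserving mild care are the identification in (i) — one should make explicit that $\mathbf{proj}(\mathbf{ker}(J))$ is being identified with $\mathbf{ker}(L)\subseteq\mathbb{R}^n$ rather than with a subspace of $\mathbb{R}^{2n}$ — and, in (ii), that only the \emph{geometric} multiplicity is asserted. The latter is the subtle part: the algebraic multiplicities of the zero eigenvalue of $J$ and $L$ generally differ (for instance, translational invariance of the flow function typically produces a nontrivial Jordan structure for the zero eigenvalue of $J$), so the careful phrasing of the statement is essential and should not be silently strengthened.
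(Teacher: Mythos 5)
Your proposal is correct and follows essentially the same route as the paper's proof in Appendix~\ref{proof of prop: geometric nullity of J and J11}: the block computation forcing $y=0$ and $Lx=0$ (using nonsingularity of $M$), the resulting identification $\mathbf{ker}(J)=\mathbf{ker}(L)\times\{0\}$, and the dimension count for (ii) with (iii) as an immediate consequence. Your closing remark distinguishing geometric from algebraic multiplicity is a sensible caution but not needed for the argument itself.
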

\begin{proof}
 See Appendix \ref{proof of prop: geometric nullity of J and J11}.
\end{proof}

As the role of $L$ in the spectrum of $J$ became more clear, we scrutinize the spectrum of $L$ in the next subsection. Our final goal is to use the spectral properties of $L$ together with the relationships established in Lemma \ref{lemma: relation between ev J and ev J11} and Proposition \ref{prop: geometric nullity of J and J11} to derive a stability certificate for multi-$\mu$Gs.

\subsection{A Directed Graph Induced by $L$} \label{subsec: digraph}
The linking grid of a multi-$\mu$G is represented by the undirected graph $\G$ defined in Section \ref{subsec: Multi-Machine Swing Equations}. However, to fully represent the  Jacobian $L$ of the flow function \eqref{eq: flow function}, we need to introduce a new \emph{weighted directed} graph (digraph). 
Let $\ora{\G}=(\mathcal{N},\mathcal{A},\mathcal{W})$, where each node $i\in\mathcal{N}$ corresponds to a microgrid and each directed arc $(i,k)\in\mathcal{A}$ corresponds to the entry $(i,k), i\ne k$ of the admittance matrix. We further define a weight for each arc  $(i,k)\in\mathcal{A}$:
\begin{align} \label{eq: weights of digraph lossy}
w_{ik} = {V_i} {V_k}{Y_{ik}}\sin \left( \varphi_{ik} \right), 
\end{align}
where $\varphi_{ik} := {{\theta _{ik}} - {\delta _i} + {\delta _k}}$. With the above definitions, we can see that the Jacobian matrix $L$ of the flow function, which appeared in \eqref{eq: Jacobian}, is indeed the \emph{Laplacian} of the weighted digraph $\ora{\G}$.
%
%
In general, the arc weights $w_{ik}$ can take any values in $\mathbb{R}$, and the matrix $L$ is not necessarily symmetric. In practice, however, $w_{ik}$ varies in a small nonnegative range. Figure \ref{fig: histogram of angles} illustrates the histogram of the angle $\varphi_{ik}$ for all lines $(i,k)$ in different IEEE and NESTA standard distribution test cases, where the converged load flow data are obtained from \textsc{MATPOWER} \cite{matpower}. Accordingly, $\varphi_{ik} \in (0,\pi)$ in all of these cases. 
%
Thus, it is reasonable to assume that the EPs $(\delta^{*},\omega^{*})$ of the multi-$\mu$G dynamical system \eqref{eq: swing equations} are located in the set $\Omega$ defined as
\begin{align*}
\hspace{-2mm}\Omega = \left\{ (\delta,\omega)\in\mathbb{R}^{2n} :   0 < \theta_{ik}-\delta_i+\delta_k < \pi , \forall (i,k) \in \mathcal{A},\omega = 0  \right\}.
\end{align*}
Under this assumption, the arc weights $w_{ik}> 0$ for all arcs $(i,k)$. So, there are two arcs $(i,k)$ and $(k,i)$ between microgrids $i$ and $k$ if and only if the two microgrids are physically connected. We always assume the physical network connecting all the microgrids is a connected (undirected) graph. The weighted digraph $\ora{\G}$ will be used to study the spectral properties of $L$.

\begin{figure}[t]
 \includegraphics*[width=3.5in, keepaspectratio=true]{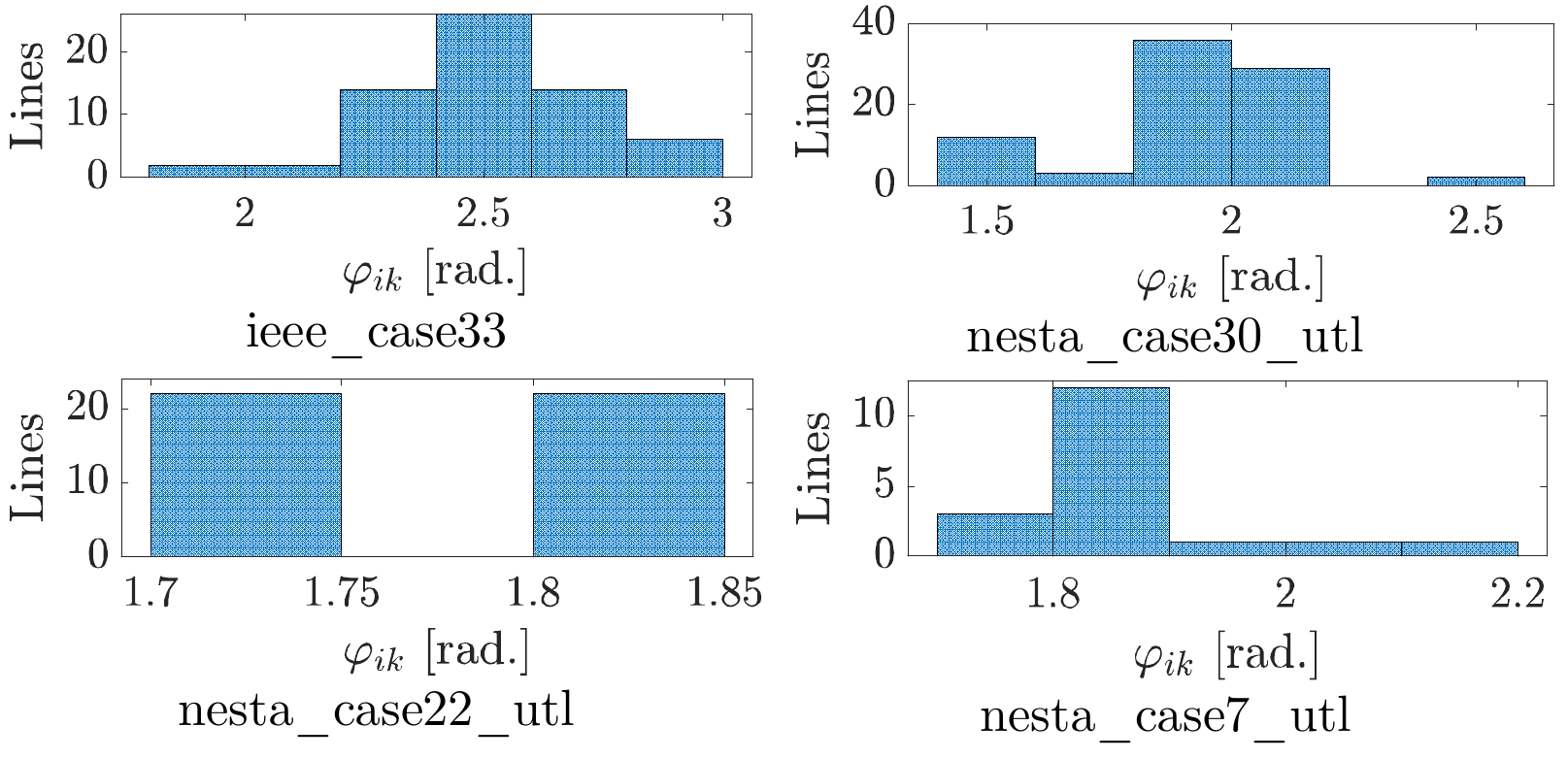}
 \centering
  \caption{Histogram of the distribution of $\varphi_{ik}$ for all lines $(i,k)$ in different IEEE and NESTA standard test cases.}
  \label{fig: histogram of angles}
\end{figure}

\subsection{Spectral Properties of $L$} \label{subsec: Spectral Properties of the Jacobian Matrix L}
When $\varphi_{ij}$'s satisfy the above angle assumption, the following proposition from \cite{2020-fast-certificate} shows that $L$ is a singular M-matrix. Moreover, the zero eigenvalue of $L$ is simple, i.e. the algebraic and geometric multiplicities are one, which is important for preventing bifurcation from happening in the multi-$\mu$grid network. 
\begin{proposition}\label{prop: quasi-M}
Let $(\delta^{*},\omega^{*})\in \Omega$ be an EP of the multi-$\mu$G system \eqref{eq: swing equations}. Assume the linking grid $\G$ is connected. The Jacobian matrix $L$ defined by \eqref{eq: Jacobian} at this EP has the following properties:
\begin{enumerate}[(i)]
	\item $L$ has a zero eigenvalue with an eigenvector $\mathbf{1}$, and the real part of each nonzero eigenvalue of $L$ is positive, i.e. $L$ is a singular M-matrix.
	
	\item The zero eigenvalue of $L$ is simple.
	
		

	\end{enumerate}
\end{proposition}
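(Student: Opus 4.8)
The plan is to exploit the special sign structure of $L$ read off from \eqref{eq: Jacobian}: nonpositive off-diagonal entries together with zero row sums, which makes $L$ simultaneously a weighted graph Laplacian and a $Z$-matrix. First I would record the elementary structural facts. In $\Omega$ we have $\varphi_{ik}\in(0,\pi)$, so every arc weight $w_{ik}>0$; consequently $L_{ik}=-w_{ik}<0$ on physical edges and $L_{ii}=\sum_{k\neq i} w_{ik}=\sum_{k\neq i}|L_{ik}|>0$. In particular all row sums vanish, giving $L\mathbf{1}=0$, which already settles the first assertion of (i): $0$ is an eigenvalue of $L$ with eigenvector $\mathbf{1}$.

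For the claim that every nonzero eigenvalue has positive real part, I would invoke Ger\v{s}gorin's disk theorem. Each eigenvalue of $L$ lies in some disk $\{z:|z-L_{ii}|\leq\sum_{k\neq i}|L_{ik}|\}=\{z:|z-L_{ii}|\leq L_{ii}\}$. Since this disk is centered on the nonnegative real axis at $L_{ii}$ with radius exactly $L_{ii}$, it lies in the closed right half-plane and is tangent to the imaginary axis precisely at the origin. Hence the union of all Ger\v{s}gorin disks meets the imaginary axis only at $0$, so $\mathrm{Re}(\lambda)\geq 0$ for every eigenvalue, with $\mathrm{Re}(\lambda)=0$ forcing $\lambda=0$. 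Combined with the nonpositive off-diagonal ($Z$-matrix) structure and the singularity just established, this is exactly the defining characterization of a singular M-matrix, completing (i).

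For (ii) I would pass to the nonnegative matrix $B:=sI-L$ with $s:=\max_i L_{ii}$. Then $B\geq 0$ entrywise, and the zero-row-sum property gives $B\mathbf{1}=s\mathbf{1}$, whence $\rho(B)=\|B\|_\infty=s$. The eigenvalues of $L$ and $B$ are related by $\lambda\mapsto s-\lambda$ with identical algebraic and geometric multiplicities, so the zero eigenvalue of $L$ is simple if and only if the Perron eigenvalue $s$ of $B$ is simple. The key ingredient is irreducibility: since $\G$ is connected and every physical edge contributes arcs $(i,k)$ and $(k,i)$ with strictly positive weights, the induced digraph $\ora{\G}$ is strongly connected, which is equivalent to irreducibility of $B$ (the positive diagonal of $B$ is immaterial to strong connectivity). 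The Perron--Frobenius theorem for irreducible nonnegative matrices then guarantees that $\rho(B)=s$ is an algebraically simple eigenvalue, and transferring through the scalar shift shows the zero eigenvalue of $L$ is simple.

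The main obstacle is part (ii): the Ger\v{s}gorin argument for (i) is essentially immediate, but simplicity genuinely requires the connectivity hypothesis and cannot be extracted from the disk theorem alone. The care needed lies in verifying irreducibility of $B$ from connectivity of $\G$ and in invoking the correct form of Perron--Frobenius (\emph{algebraic}, not merely geometric, simplicity for irreducible matrices); the shift $L=sI-B$ then carries this conclusion back to $L$ without loss.
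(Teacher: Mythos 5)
Your proof is correct. Note that the paper itself does not prove Proposition~\ref{prop: quasi-M}; it imports it from the cited companion work \cite{2020-fast-certificate}, so there is no in-paper argument to compare against line by line. That said, your route is the canonical one for Laplacians of strongly connected weighted digraphs, and every step checks out: in $\Omega$ all arc weights $w_{ik}$ are positive, so $L$ has nonpositive off-diagonal entries and zero row sums, giving $L\mathbf{1}=0$ and Ger\v{s}gorin disks $\{z:\lvert z-L_{ii}\rvert\le L_{ii}\}$ that meet the imaginary axis only at the origin, which yields (i); and for (ii) the shift $B=sI-L$ with $s=\max_i L_{ii}$ is entrywise nonnegative with $\rho(B)=\lVert B\rVert_\infty=s$, is irreducible because connectivity of $\G$ makes $\ora{\G}$ strongly connected, and Perron--Frobenius then gives algebraic simplicity of $s$ for $B$, hence of $0$ for $L$. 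You are right that the only place connectivity is genuinely needed is (ii), and that the diagonal of $B$ is irrelevant to irreducibility.
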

%
Properties (i) and (ii) of $L$ shown in the above proposition will be used in the next section to prove the stability of $J$ in the main result of the paper. 
\section{Stability of Multi-Microgrid Networks} \label{Sec: Stability and Hyperbolicity of the Equilibrium Points}
Now we are ready to answer the fundamental question posed in Section \ref{subsec: Multi-Machine Swing Equations}: under what conditions is an EP $(\delta^*,\omega^*)$ locally asymptotically stable?
\subsection{The Main Stability Theorem}\label{sec:main_stability_theorem}
\begin{theorem} \label{thm: stability properties}
    Let $(\delta^{*},\omega^{*})\in \Omega$ be an EP of the multi-$\mu$G system \eqref{eq: swing equations}. Let $B\in\mathbb{R}^{n\times n}$ denote the imaginary part of the admittance matrix. Suppose all microgrid interfaces have positive damping coefficient and inertia, and the linking grid $\G$ is connected. Then, the following statements hold:
    \begin{enumerate}[(a)]
        \item  The Jacobian $J$ at this EP has a zero eigenvalue with geometric multiplicity of one. \label{thrm-a}
        \item All the nonzero real eigenvalues of $J$ are negative. \label{thrm-b}
        \item Let $Q_i$ be the net outgoing reactive power flow from microgrid PCC $i$. If 
        \begin{align} \label{eq: condition for stability}
            -Q_i - V_i^2 B_{ii} \le \frac{d_i^2}{2m_i}, &&    \forall i \in \mathcal{N} 
        \end{align}
        then all the nonzero eigenvalues of $J$, both real and complex, are located in the left half plane, i.e., $\sigma(J)\subset\mathbb{C}_-\cup\{0\}$, and the EP is locally asymptotically stable. \label{thrm-c}
        
        \item If the transfer conductance of the lines is zero, then all the nonzero eigenvalues of $J$ are located in the left half plane, and the EP is locally asymptotically stable. \label{thrm-d}
    \end{enumerate}
\end{theorem}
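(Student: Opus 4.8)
The plan is to treat the lossless case as a clean specialization of the quadratic-pencil machinery, exploiting the extra symmetry that vanishing transfer conductances confer on $L$. First I would observe that when every $g_{ik}=0$ we have $y_{ik}=jb_{ik}$, hence $\theta_{ik}=\pi/2$, so that $\sin\varphi_{ik}=\cos(\delta_k-\delta_i)$ in the arc-weight formula \eqref{eq: weights of digraph lossy}. Since $Y_{ik}=Y_{ki}$ and cosine is even, the weights then satisfy $w_{ik}=w_{ki}$, so the digraph $\ora{\G}$ becomes weight-symmetric and $L=L^{\top}$. This symmetry is the structural fact that makes the lossless case tractable; it is precisely the property that is absent in the general lossy setting of part \eqref{thrm-c}, where the skew-symmetric part of $L$ (driven by the conductances) does not vanish.

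Next I would upgrade Proposition \ref{prop: quasi-M} with this symmetry. By that proposition $L$ is a singular M-matrix with a simple zero eigenvalue and eigenvector $\mathbf 1$. A symmetric M-matrix has real nonnegative eigenvalues and is therefore positive semidefinite; together with simplicity of the zero eigenvalue this gives $\ker(L)=\mathrm{span}\{\mathbf 1\}$ and, crucially, $v^{*}Lv>0$ for every $v\notin\mathrm{span}\{\mathbf 1\}$. I would then invoke Lemma \ref{lemma: relation between ev J and ev J11}: a nonzero $\lambda\in\sigma(J)$ forces the pencil $P(\lambda)=\lambda^{2}M+\lambda D+L$ to be singular, so there is a $v\neq 0$ with $P(\lambda)v=0$. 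Left-multiplying by $v^{*}$ collapses the problem to the scalar equation
\begin{align*}
a\lambda^{2}+b\lambda+c=0,\qquad a:=v^{*}Mv>0,\ \ b:=v^{*}Dv>0,\ \ c:=v^{*}Lv\ge 0,
\end{align*}
where $a,b$ are positive because $M,D$ are positive diagonal and $c$ is real and nonnegative because $L=L^{\top}\succeq 0$.

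The conclusion then splits into two cases and, notably, requires no hypothesis on the interface parameters (in contrast with condition \eqref{eq: condition for stability}). If $v\notin\mathrm{span}\{\mathbf 1\}$ then $c>0$: all three coefficients of the real quadratic are strictly positive, so by the degree-two Routh--Hurwitz test (equivalently, by Vieta's formulas the two roots have sum $-b/a<0$ and product $c/a>0$) both roots lie in the open left half-plane, whence $\mathrm{Re}\,\lambda<0$. If $v\in\mathrm{span}\{\mathbf 1\}$ then $c=0$ and the necessary condition reduces to $a\lambda^{2}+b\lambda=0$, forcing $\lambda\in\{0,-b/a\}$; since $\lambda\neq 0$ we again obtain $\lambda=-b/a<0$. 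Hence every nonzero eigenvalue of $J$ lies in $\mathbb{C}_{-}$, and combined with the simple zero eigenvalue of part \eqref{thrm-a} this yields $\sigma(J)\subset\mathbb{C}_{-}\cup\{0\}$.

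Finally I would settle local asymptotic stability, which is where the only genuine subtlety lies. The persistent zero eigenvalue is not a flaw: its eigenvector is $(\mathbf 1,0)$ and it encodes the translational invariance $\delta\mapsto\delta+\alpha\mathbf 1$ of the flow function. The point to verify is that this mode carries no Jordan block, so that no secular (polynomially growing) solution survives. I would check this directly: a generalized eigenvector $u=(u_{1},u_{2})$ with $Ju=(\mathbf 1,0)$ requires $u_{2}=\mathbf 1$ and $Lu_{1}=-D\mathbf 1$, which is inconsistent because $\mathrm{range}(L)=\{\mathbf 1\}^{\perp}$ (as $L$ is symmetric) while $\mathbf 1^{\top}(D\mathbf 1)=\sum_i d_i>0$; thus the zero eigenvalue of $J$ is algebraically simple. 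With a single, benign zero eigenvalue and all other eigenvalues strictly in $\mathbb{C}_{-}$, restricting \eqref{eq: swing equations} to the invariant complement of the uniform-phase direction (equivalently, grounding one reference angle) gives a Hurwitz reduced Jacobian, and Lyapunov's indirect method delivers local asymptotic stability of the EP modulo the uniform phase shift, i.e. frequency synchronization. I expect the main obstacle to be exactly this last reduction—cleanly separating the harmless translational mode from the dynamics and arguing algebraic simplicity—since the spectral computation itself is routine once $L=L^{\top}\succeq 0$ is established.
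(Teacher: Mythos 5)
Your proposal only addresses part (\ref{thrm-d}) of the theorem (plus the algebraic simplicity of the zero eigenvalue); parts (\ref{thrm-a}), (\ref{thrm-b}) and, most importantly, the lossy certificate (\ref{thrm-c}) --- which is the paper's main contribution --- are not proved. Within that scope, however, your argument is correct and takes a genuinely different route from the paper. The paper proves (\ref{thrm-d}) by contradiction in two steps: it assembles the real and imaginary parts of the pencil equation into a $2n\times 2n$ real block matrix $H(\beta)$ (resp.\ $H(\alpha,\beta)$) and shows via a Schur-complement computation that this matrix is nonsingular whenever $\beta\neq 0$ and $\alpha\ge 0$, forcing the eigenvector to vanish. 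You instead left-multiply $P(\lambda)v=0$ by $v^{*}$, use $L=L^{\top}\succeq 0$ (which does follow from $\theta_{ik}=\pi/2$ together with Proposition \ref{prop: quasi-M}) to get a real scalar quadratic $a\lambda^{2}+b\lambda+c=0$ with $a,b>0$, $c\ge 0$, and finish with Routh--Hurwitz; this is shorter, avoids the block-matrix bookkeeping, and handles the imaginary-axis and right-half-plane cases in one stroke. Your treatment of the zero mode is also more careful than the paper's: you explicitly rule out a Jordan block at $0$ by showing $Lu_1=-D\mathbf 1$ is unsolvable since $-D\mathbf 1\notin\mathrm{range}(L)=\{\mathbf 1\}^{\perp}$, whereas the paper only gestures at choosing a reference bus.

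The trade-off is that your Rayleigh-quotient reduction hinges entirely on the symmetry of $L$: for a lossy network $v^{*}Lv$ is in general complex, so the scalar-quadratic collapse is unavailable, and one must fall back on something like the paper's componentwise (Gershgorin-type) argument on the row of maximal modulus, which is exactly how the paper proves part (\ref{thrm-c}). So your approach buys elegance in the lossless special case but does not generalize to the certificate \eqref{eq: condition for stability}; if you intend this as a proof of the full theorem, parts (\ref{thrm-b}) and (\ref{thrm-c}) still need the separate arguments given in Appendix \ref{app: proof of thm: stability properties} and \cite{2020-fast-certificate}.
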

\begin{proof}
See Appendix \ref{app: proof of thm: stability properties}.
\end{proof}
\begin{remark}
Properties (\ref{thrm-a}) and (\ref{thrm-b}) hold independently of the sufficient conditions in (\ref{thrm-c}) and (\ref{thrm-d}). Property (\ref{thrm-d}) says if the network is lossless, then regardless of (\ref{thrm-c}), any EP is stable. If instead the network is lossy, then not every EP is stable and condition \eqref{eq: condition for stability} provides a new certificate to guarantee the small-signal stability of an EP. 
\end{remark}
\begin{remark}
    Notice that a salient feature of condition \eqref{eq: condition for stability} is that it only requires local information at each microgrid interface, hence, leads to a fully distributed control scheme to stabilize the multi-$\mu$G system. Detailed numerical simulation will be shown in Section \ref{Sec: Computational Experiments}.
\end{remark}
%
%
\subsection{Intuition and Paradox Behind Condition \eqref{eq: condition for stability}}

Condition \eqref{eq: condition for stability} in Theorem \ref{thm: stability properties} provides a practical and efficient way to certify the stability of the EPs in general lossy multi-$\mu$G networks. It also introduces a distributed control rule for tuning the interface parameters of each microgrid without compromising the network stability.
In this section, we will explore the intuition behind this theory as well as two interesting paradoxes that come with it.

    \begin{itemize}

    \item \emph{\textbf{Note 1:}}
    The variable $Q_i$ in \eqref{eq: condition for stability} is the net reactive power that microgrid $i$ injects into the rest of the multi-$\mu$G network, that is, if microgrid $i$ is supplying reactive power, then $Q_i>0$. Otherwise, if it is consuming reactive power, then $Q_i<0$. Intuitively, when microgrid $i$ is a supplier of reactive power, the first term on the left-hand side of \eqref{eq: condition for stability} is negative, and this situation will help condition \eqref{eq: condition for stability} hold. 
    
    \item
    \emph{\textbf{Note 2:}} 
    Recall that $Y_{ii}\measuredangle \theta_{ii}=G_{ii}+jB_{ii} = \sum_{k=1}^n y_{ik}$, where $y_{ik}=g_{ik}+jb_{ik}$ is the admittance of line $(i,k)$, with $g_{ik}\ge0$ and $b_{ik}\le0$. Therefore, $B_{ii}\le0$, and the second term on the left-hand side of \eqref{eq: condition for stability} is always positive. Here, it is assumed that $y_{ii}$, i.e., the admittance-to-ground at PCC $i$ is negligible. Otherwise, we may have $B_{ii}>0$, and the second term on the left-hand side of \eqref{eq: condition for stability} could be negative.
    
    \item
    \emph{\textbf{Note 3:}}
    The first two notes clarify that the left-hand side of \eqref{eq: condition for stability} can be negative if microgrid $i$ is supplying reactive power; otherwise it is positive. Consequently, condition \eqref{eq: condition for stability} is not trivial.
    
    \item
    \emph{\textbf{Note 4:}}
    Condition \eqref{eq: condition for stability} enforces an upper bound which is proportional to the square of damping and inverse of inertia. This is consistent with the intuition that if we increase the damping, the stability margin of the system will increase. However, it is not intuitive (could be a paradox) that decreasing the virtual inertia of a microgrid interface will increase the stability margin.
    
    \item
    \emph{\textbf{Note 5:}} By adding more transmission lines to the system, $|B_{ii}|$ will increase, and this in turn could increase the left-hand side of \eqref{eq: condition for stability} and lead to the violation of this condition. This can be called the Braess's Paradox in power system stability.
    %
    \end{itemize}
The following corollary will further illustrate the aforementioned Braess's Paradox.
\begin{corollary} \label{coro: paradox}
Under the assumptions of Theorem \ref{thm: stability properties}, if 
\begin{align} \label{eq: condition for stability without sin}
            \sum \limits_{k=1, k \ne i}^n { V_i V_k Y_{ik} } \le \frac{d_i^2}{2m_i},     && \forall i \in \mathcal{N}
\end{align}
then the nonzero eigenvalues of $J$ are located in the left half plane and the EP is locally asymptotically stable.
\end{corollary}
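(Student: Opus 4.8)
The plan is to deduce the corollary directly from part (\ref{thrm-c}) of Theorem \ref{thm: stability properties} by showing that condition \eqref{eq: condition for stability without sin} is a more conservative sufficient condition that implies condition \eqref{eq: condition for stability}. Concretely, I would establish the pointwise inequality $-Q_i - V_i^2 B_{ii} \le \sum_{k=1, k\ne i}^n V_i V_k Y_{ik}$ for every $i \in \mathcal{N}$; once this is in hand, the corollary's hypothesis $\sum_{k\ne i} V_i V_k Y_{ik} \le d_i^2/(2m_i)$ immediately yields $-Q_i - V_i^2 B_{ii} \le d_i^2/(2m_i)$, which is precisely \eqref{eq: condition for stability}, and the conclusion (nonzero eigenvalues of $J$ in the open left half plane, EP locally asymptotically stable) then follows verbatim from Theorem \ref{thm: stability properties}(\ref{thrm-c}).

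First I would write the net outgoing reactive power at PCC $i$ as the reactive counterpart of the flow function \eqref{eq: flow function}, namely $Q_i = -\sum_{k=1}^n V_i V_k Y_{ik}\sin(\theta_{ik}-\delta_i+\delta_k)$, which is obtained from the same polar admittance-matrix convention used to derive \eqref{eq: flow function}. I would then isolate the diagonal $k=i$ summand: since $Y_{ii}\measuredangle\theta_{ii}=G_{ii}+jB_{ii}$, we have $Y_{ii}\sin\theta_{ii}=B_{ii}$, so that term equals $V_i^2 B_{ii}$. This gives $-Q_i = V_i^2 B_{ii} + \sum_{k\ne i} V_i V_k Y_{ik}\sin\varphi_{ik}$, with $\varphi_{ik}=\theta_{ik}-\delta_i+\delta_k$ exactly as in \eqref{eq: weights of digraph lossy}. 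Consequently the entire left-hand side of \eqref{eq: condition for stability} collapses to the off-diagonal sine sum, $-Q_i - V_i^2 B_{ii} = \sum_{k\ne i} V_i V_k Y_{ik}\sin\varphi_{ik}$, which is the key identity that links the two conditions.

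The final step is the bounding argument. Because $(\delta^{*},\omega^{*})\in\Omega$ forces $\varphi_{ik}\in(0,\pi)$, we have $0<\sin\varphi_{ik}\le 1$, and because $V_i$, $V_k$, $Y_{ik}$ are magnitudes, each coefficient $V_i V_k Y_{ik}$ is nonnegative; hence the term-by-term bound $V_i V_k Y_{ik}\sin\varphi_{ik}\le V_i V_k Y_{ik}$ is legitimate, and summing over $k\ne i$ delivers the desired inequality $-Q_i - V_i^2 B_{ii} \le \sum_{k\ne i} V_i V_k Y_{ik}$. Chaining this with the corollary's hypothesis and invoking Theorem \ref{thm: stability properties}(\ref{thrm-c}) closes the proof. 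The only point requiring genuine care is the sign and convention bookkeeping in the reactive-power expression, in particular the identification of the diagonal summand as exactly $V_i^2 B_{ii}$ so that it cancels the $-V_i^2 B_{ii}$ term in \eqref{eq: condition for stability}; I would also state explicitly the nonnegativity of the coefficients, since that is what licenses replacing $\sin\varphi_{ik}$ by its upper bound of $1$.
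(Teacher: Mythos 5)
Your proposal is correct and follows essentially the same route the paper intends: the identity $-Q_i - V_i^2 B_{ii} = L_{ii} = \sum_{k\ne i} V_i V_k Y_{ik}\sin\varphi_{ik}$ is exactly equation \eqref{eq: reactive power flow} in the paper's proof of Theorem \ref{thm: stability properties}(\ref{thrm-c}), and bounding $\sin\varphi_{ik}\le 1$ (legitimate since $\varphi_{ik}\in(0,\pi)$ and the coefficients are nonnegative) reduces \eqref{eq: condition for stability without sin} to \eqref{eq: condition for stability}. This matches the paper's one-line justification that the corollary ``directly follows from the proof of Theorem \ref{thm: stability properties}.''
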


This corollary directly follows from the proof of Theorem \ref{thm: stability properties} provided in Appendix \ref{app: proof of thm: stability properties}. Counterintuitively, according to \eqref{eq: condition for stability without sin}, adding more power lines can lead to violating the sufficient condition for stability. This Braess's Paradox in power systems has been also acknowledged in \cite{2015-robustness-paradox-network} and \cite{Vu-2019-reconfig-microgrids} in different contexts and using different approaches. 
Note that removing lines from a network could make the system more vulnerable to contingencies and eliminate the reliability benefits of having more transmission line capacity. This trade-off should be taken into account during the design and operation of power grids. 


\subsection{Stability Condition in Structure-Preserving Networks}
\label{subsec: structure-preserving microgrid}
%
%
\subsubsection{Motivation}
The stability certificate \eqref{eq: condition for stability} in Section \ref{sec:main_stability_theorem} is derived for the linking grid
$\mathcal{G}=(\mathcal{N},\mathcal{E})$, where each microgrid is reduced to one node modeled as a grid-forming inverter using the swing equation. In this section, we consider the more general situation, where the internal active and passive elements of a microgrid are explicitly modeled. In particular, let $\mathcal{G}^d = (\mathcal{N}^d, \mathcal{E}^d)$ be the distribution network composed of all microgrids $\mathcal{G}_i=(\mathcal{N}_i, \mathcal{E}_i)$ for $i\in\mathcal{N}$ as subnetworks. The buses $\mathcal{N}_i$ of microgrid $i$ may include both active nodes (i.e., those connected to DGs and/or VSIs) and passive nodes (i.e., those connected to a constant admittance load). 

In order to study the stability property of $\mathcal{G}^d$, we first use Kron reduction to eliminate all passive nodes from each microgrid and study the resulting reduced network $\mathcal{G}^r$. The stability condition \eqref{eq: condition for stability} can be applied to $\mathcal{G}^r$. However, such a certificate is expressed in the system states and parameters of $\mathcal{G}^r$, not of the original network $\mathcal{G}^d$. This is not desirable, as it obscures the relations between the topology of the original network and the
stability properties of the EPs. Moreover, the parameters of the Kron-reduced network $\mathcal{G}^r$ may not be available to the individual microgrid controllers in real time. We want to find a stability certificate for the Kron-reduced network $\mathcal{G}^r$, expressed in the system states and network topology of the original network $\mathcal{G}^d$.

To tackle this challenge, we first identify in Section \ref{sec: Kron} a sufficient condition on the admittances of the original network $\mathcal{G}^d$, under which certain monotonic relationship between the admittances of $\mathcal{G}^d$ and  $\mathcal{G}^r$ can be obtained. Then in Section \ref{sec: stability_cert_originalnetwork}, we use this monotonicity property to derive a stability certificate expressed directly in states and parameters of $\mathcal{G}^d$.

\subsubsection{The Kron-Reduced and Original Networks}\label{sec: Kron}
%
\begin{definition} \label{def: Kron}
	Let $Y$ be the nodal admittance matrix of a microgrid $\mathcal{G}_i = (\mathcal{N}_i, \mathcal{E}_i)$, where the set of active and passive nodes are denoted by $\alpha,\beta \subseteq\mathcal{N}_i$, respectively. The Kron reduction of $\mathcal{G}_i$ that eliminates all nodes in $\beta$ has an admittance matrix given by $Y^{r} := Y[\alpha,\alpha] - Y[\alpha,\beta] Y[\beta,\beta]^{-1} Y[\beta,\alpha]$. This Kron-reduced network is denoted by $\mathcal{G}_i^r$.
\end{definition}
%

Assumption \ref{as: y_matrix sign} below is widely satisfied in distribution grids.
\begin{assumption} \label{as: y_matrix sign}
The nodal admittance matrix $Y=G+jB$ of a distribution grid  satisfies $G_{ik}\le0, B_{ik}\ge0,$ for all $i\ne k$, and $G_{ii}\ge0, B_{ii}\le0$ for self-admittances. 
\end{assumption}
%
%
Assumption \ref{as: y_matrix bound} below is the sufficient condition used in Lemma \ref{lem: assumption closed Kron} to derive a monotonicity relation between the admittances of the original and Kron-reduced networks.
\begin{assumption} \label{as: y_matrix bound}
Let $Y=G+jB$ be the nodal admittance matrix of a distribution grid. There exist two fixed real numbers $\nu_{\min}$ and $\nu_{\max}$ that satisfy 
\begin{align}\label{eq:kron_nu}
    & 0\le \nu_{\min} \le \nu_{\max}\le \sqrt{1+ 2\nu^2_{\min}}
\end{align}
such that, for every line $(i,k)$, the conductance $G_{ik}$ and susceptance $B_{ik}$ are bounded as 
\begin{align}\label{eq:kron_bound}
    \nu_{\min} |G_{ik}| \le |B_{ik}| \le \nu_{\max} |G_{ik}|.
\end{align} 
\end{assumption}
\begin{remark}
%
By \eqref{eq:kron_bound}, if $G_{ik}=0$, then $B_{ik}=0$; otherwise, $\nu_{\min}\le |B_{ik}|/|G_{ik}|\le \nu_{\max}$, where the upper and lower bounds satisfy \eqref{eq:kron_nu}. As an example, if $\nu_{\min}=5$, then we can set $\nu_{\max}=\sqrt{1+2\cdot 5^2}=7.14$. Then, according to \eqref{eq:kron_bound}, all lines have $|B_{ik}|/|G_{ik}|$ ratio between $5$ and $7.14$, which is typical in distribution grids, especially in microgrids.
\end{remark}
%
%
%
%
%
\begin{lemma} \label{lem: assumption closed Kron}
Suppose the nodal admittance matrix $Y=G+jB$ of a distribution grid satisfies Assumptions \ref{as: y_matrix sign} and \ref{as: y_matrix bound} and the Kron-reduced matrix $Y^r=G^r+jB^r$ from eliminating a passive node $k_0\in\mathcal{N}_i$ satisfies Assumption \ref{as: y_matrix bound}. Then, $Y^r$ also satisfies Assumption \ref{as: y_matrix sign}. Furthermore, the monotonicity condition, $ B^{r}_{kk} \ge B_{kk}$, holds for all nodes $k\ne k_0$.
\end{lemma}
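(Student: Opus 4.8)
The plan is to exploit that only the single passive node $k_0$ is removed, so the reduction is governed by the scalar Kron formula $Y^r_{pq}=Y_{pq}-Y_{pk_0}Y_{k_0 q}/Y_{k_0 k_0}$ for all surviving nodes $p,q\ne k_0$, the diagonal case $p=q$ specializing to $Y^r_{pp}=Y_{pp}-Y_{pk_0}^2/Y_{k_0 k_0}$ by symmetry of $Y$. Every claim then reduces to controlling the real and imaginary parts of the single complex correction $C_{pq}:=Y_{pk_0}Y_{k_0 q}/Y_{k_0 k_0}$. The first step is to record the signs from Assumption~\ref{as: y_matrix sign}: on the incident edges $G_{pk_0}\le 0$, $B_{pk_0}\ge 0$, and at the pivot $G_{k_0 k_0}\ge 0$, $B_{k_0 k_0}\le 0$.

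Next I would show the pivot self-admittance inherits the ratio bounds of Assumption~\ref{as: y_matrix bound}. Assuming the shunt admittances are negligible so that $Y$ has zero row sums, $Y_{k_0 k_0}=-\sum_{m\ne k_0}Y_{k_0 m}$, giving $G_{k_0 k_0}=\sum_m|G_{k_0 m}|$ and $|B_{k_0 k_0}|=\sum_m|B_{k_0 m}|$; summing the edgewise bounds then yields $\nu_{\min}G_{k_0 k_0}\le|B_{k_0 k_0}|\le\nu_{\max}G_{k_0 k_0}$. Writing $Y_{k_0 k_0}=c+jd$ with $c\ge 0$, $d\le 0$ and the incident admittances as $Y_{pk_0}=-g_p+jb_p$ with $g_p,b_p\ge 0$, I would expand $\mathrm{Re}(C_{pq})$ and $\mathrm{Im}(C_{pq})$ over the positive denominator $c^2+d^2$. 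For off-diagonal entries the targets are $\mathrm{Re}(C_{pq})\ge 0$ and $\mathrm{Im}(C_{pq})\le 0$, which preserve $G^r_{pq}\le 0$ and $B^r_{pq}\ge 0$; after clearing the denominator and using the homogeneity of each inequality in the pairs $(g_p,b_p)$ and $(g_q,b_q)$, both requirements collapse to scalar inequalities in the edge ratios $s,t\in[\nu_{\min},\nu_{\max}]$.

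The conductance target $\mathrm{Re}(C_{pq})\ge 0$ is the crux and is where the hypothesis on $\nu_{\max}$ is consumed: its worst case, at $s=t=\nu_{\max}$, reduces to $\nu_{\max}^2\le 1+2\nu_{\min}\nu_{\max}$, and since $\nu_{\min}\le\nu_{\max}$ the assumed bound $\nu_{\max}\le\sqrt{1+2\nu_{\min}^2}$ gives $\nu_{\max}^2\le 1+2\nu_{\min}^2\le 1+2\nu_{\min}\nu_{\max}$, closing this case. The susceptance target $\mathrm{Im}(C_{pq})\le 0$ is comparatively direct in the reactance-dominant regime, where $b_p b_q\ge g_p g_q$ forces the relevant sign. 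The two diagonal sign conditions then come for free: Kron reduction preserves zero row sums (from $Y\mathbf{1}=0$ one checks $Y^r\mathbf{1}=0$), so $G^r_{kk}=-\sum_{m\ne k}G^r_{km}\ge 0$ and $B^r_{kk}=-\sum_{m\ne k}B^r_{km}\le 0$ follow from the off-diagonal signs already established, completing the claim that $Y^r$ satisfies Assumption~\ref{as: y_matrix sign}.

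Finally, $B^r_{kk}\ge B_{kk}$ is equivalent to $\mathrm{Im}(Y_{kk_0}^2/Y_{k_0 k_0})\le 0$. With $Y_{kk_0}=-g+jb$, $g=|G_{kk_0}|$, $b=B_{kk_0}\ge 0$, this imaginary part has numerator $-2gbc+(b^2-g^2)d$ over $c^2+d^2>0$, and in the reactance-dominant regime $b\ge g$ both terms are nonpositive ($-2gbc\le 0$, and $(b^2-g^2)d\le 0$ since $d\le 0$), giving the monotonicity. The main obstacle throughout is the coupling that the complex division by $Y_{k_0 k_0}$ creates between real and imaginary parts: the conductance and susceptance of $Y^r$ each mix the $G$ and $B$ parts of $Y$, and it is exactly the control of the conductance correction that forces the ratio constraint tying $\nu_{\min}$ to $\nu_{\max}$.
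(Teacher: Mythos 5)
Your proposal is correct and follows essentially the same route as the paper's proof: expand the single-pivot Kron correction $Y_{pk_0}Y_{k_0q}/Y_{k_0k_0}$, control the signs of its real and imaginary parts using Assumptions \ref{as: y_matrix sign} and \ref{as: y_matrix bound} (with the conductance part consuming $\nu_{\max}^2\le 1+2\nu_{\min}^2$), recover the diagonal signs from preservation of zero row sums, and verify $B^r_{kk}\ge B_{kk}$ by the same two-term sign check on $\Im(Y_{kk_0}^2/Y_{k_0k_0})$. Your explicit derivation of the pivot's ratio bound $\nu_{\min}G_{k_0k_0}\le|B_{k_0k_0}|\le\nu_{\max}G_{k_0k_0}$ from the zero-row-sum property is a step the paper uses only implicitly, and your honest flagging of the ``reactance-dominant'' requirement $b_pb_q\ge g_pg_q$ matches an inequality the paper likewise asserts without further justification.
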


See Appendix \ref{sec: proof of lem: assumption closed Kron} for the proof of this lemma.
%
%
%
%
%
%

\subsubsection{Stability Condition as a Function of Original Network}\label{sec: stability_cert_originalnetwork}
%
Recall that the Kron-reduced network $\mathcal{G}^r$ is obtained by Kron-reducing all passive nodes in all the microgrids. So $\mathcal{G}^r$ only contains active nodes and its dynamical system is defined by model \eqref{eq: swing equations}, where each active node has a swing equation. The next theorem is the key result of this section that states a stability certificate for $\mathcal{G}^r$ but expressed in the states, network topology, and parameters of the original multi-$\mu$G, where microgrids are allowed to have an arbitrary internal structure with DGs, grid-forming inverters, and passive loads. 
%
%
\begin{theorem} \label{thrm: stability of original net}
    Suppose Assumption \ref{as: y_matrix sign} holds for all the microgrids $\mathcal{G}_i=(\mathcal{N}_i, \mathcal{E}_i)$ for $i\in\mathcal{N}$ in the distribution grid $\mathcal{G}^d$ and Assumption \ref{as: y_matrix bound} holds for the reduced admittance matrix of $\mathcal{G}^d$ resulting from removing any passive node $k_0$ in $\mathcal{G}_i$ for any $i\in\mathcal{N}$. Then an EP of the Kron-reduced grid $\mathcal{G}^r$ is locally asymptotically stable, if the following condition holds
   \begin{align} \label{eq: stability condition orig net B}
    		-Q_k - V_k^2 B_{kk} \le \frac{d_k^2}{2m_k}, \quad \forall k\in\alpha_i, i \in \mathcal{N},
    	\end{align}
    where $\alpha_i\subseteq\mathcal{N}_i$ is the set of active nodes in microgrid $\mathcal{G}_i$ and all quantities $Q_k, V_k, B_{kk}, d_k, m_k$ correspond to the original network $\mathcal{G}^d$.
%
\end{theorem}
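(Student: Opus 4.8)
The plan is to deduce the result from the main stability theorem (Theorem~\ref{thm: stability properties}) applied to the Kron-reduced network $\mathcal{G}^r$, and then to rewrite the resulting reduced-network certificate in terms of original-network quantities using the monotonicity in Lemma~\ref{lem: assumption closed Kron}. The starting observation is that Kron reduction is merely an algebraic elimination of the passive nodes from the nodal equation $I = YV$: since every passive node carries zero external current injection (its constant-admittance load is folded into $Y$), partitioning into active indices $\alpha$ and passive indices $\beta$ and imposing $I_\beta = 0$ yields $I_\alpha = Y^r V_\alpha$ with the \emph{same} voltage phasors $V_\alpha$ and the \emph{same} active-node currents $I_\alpha$ as in $\mathcal{G}^d$. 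Consequently the complex power $S_k = V_k I_k^*$ at each active node $k$ is invariant under the reduction, so both the voltage magnitude $V_k$ and the net outgoing reactive power $Q_k$ entering the certificate are identical in $\mathcal{G}^d$ and $\mathcal{G}^r$. The only quantity that genuinely changes is the self-susceptance, which becomes $B_{kk}^r$ in $\mathcal{G}^r$.

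Applying part~(\ref{thrm-c}) of Theorem~\ref{thm: stability properties} to the swing-equation dynamics on $\mathcal{G}^r$, the EP is locally asymptotically stable whenever $-Q_k - V_k^2 B_{kk}^r \le d_k^2/(2m_k)$ holds for every active node $k$. To pass from this reduced-network inequality to the desired original-network condition \eqref{eq: stability condition orig net B}, I would establish the global monotonicity $B_{kk}^r \ge B_{kk}$ at every active node, where $B_{kk}$ is the self-susceptance in $\mathcal{G}^d$. This is obtained by eliminating the passive nodes one at a time and chaining the single-node inequality of Lemma~\ref{lem: assumption closed Kron} by transitivity: writing $Y^{(0)} = Y, Y^{(1)}, \dots, Y^{(p)} = Y^r$ for the sequence of partial reductions, each step $Y^{(t)} \to Y^{(t+1)}$ has its source matrix satisfying Assumptions~\ref{as: y_matrix sign} and~\ref{as: y_matrix bound} (Assumption~\ref{as: y_matrix sign} is propagated forward by the conclusion of Lemma~\ref{lem: assumption closed Kron}, while Assumption~\ref{as: y_matrix bound} at each intermediate stage is supplied by the theorem's hypothesis), so the lemma yields $B^{(t+1)}_{kk} \ge B^{(t)}_{kk}$ for every surviving node $k$. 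Composing these over $t = 0, \dots, p-1$ gives $B_{kk}^r \ge B_{kk}$.

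With the monotonicity in hand the conclusion is immediate: since $V_k^2 > 0$ and $B_{kk}^r \ge B_{kk}$ imply $-V_k^2 B_{kk}^r \le -V_k^2 B_{kk}$, condition \eqref{eq: stability condition orig net B} yields $-Q_k - V_k^2 B_{kk}^r \le -Q_k - V_k^2 B_{kk} \le d_k^2/(2m_k)$ for every active node $k$. Hence the reduced-network certificate of part~(\ref{thrm-c}) of Theorem~\ref{thm: stability properties} is satisfied, and the EP of $\mathcal{G}^r$ is locally asymptotically stable.

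I expect the main obstacle to be the inductive monotonicity argument rather than the final substitution. Lemma~\ref{lem: assumption closed Kron} eliminates only a single passive node and, crucially, requires Assumption~\ref{as: y_matrix bound} to hold both for the matrix being reduced and for its one-step reduction; the delicate point is therefore to confirm that the theorem's hypothesis indeed furnishes Assumption~\ref{as: y_matrix bound} at \emph{every} intermediate stage of the elimination, and that the order of elimination is immaterial (which holds because the fully Kron-reduced matrix is independent of elimination order). A secondary point requiring care is the invariance of $Q_k$ and $V_k$ under reduction, which must be justified from $I_\beta = 0$ rather than simply assumed.
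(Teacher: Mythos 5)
Your proposal follows essentially the same route as the paper's own proof: eliminate the passive nodes one at a time, propagate Assumption~\ref{as: y_matrix sign} and chain the single-step monotonicity $B^{(\ell)}_{kk}\ge B^{(\ell-1)}_{kk}$ of Lemma~\ref{lem: assumption closed Kron} by induction to get $B^r_{kk}\ge B_{kk}$, observe that $V_k$ and $Q_k$ at active nodes are invariant under the reduction (the paper states this invariance without the explicit $I_\beta=0$ justification you give), and then invoke part~(\ref{thrm-c}) of Theorem~\ref{thm: stability properties} on $\mathcal{G}^r$. The delicate point you flag — that Assumption~\ref{as: y_matrix bound} must be available at every intermediate elimination stage — is handled no more carefully in the paper itself, so your treatment is at least as complete.
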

The proof of this theorem is given in
Appendix \ref{sec: proof of thrm: stability of original net}.
\section{Computational Experiments} \label{Sec: Computational Experiments}
In this section, we test various aspects of Theorems \ref{thm: stability properties} and \ref{thrm: stability of original net}, and show how they can be used not only as a fast stability certificate, but also as a quantitative measure of the degree of stability. Furthermore, we demonstrate that condition \eqref{eq: condition for stability} offers a distributed control rule to retain and ensure the stability of interconnected microgrids in an emergency situation. Let us define
\begin{align*}
    \mathcal{S}_i := -Q_i - V_i^2 B_{ii} - \frac{d_i^2}{2m_i},
\end{align*}
and recall that according to condition \eqref{eq: condition for stability} in Theorem \ref{thm: stability properties}, if $\mathcal{S}_i\le0, \forall i\in\mathcal{N}$, then the EP of the multi-$\mu$G system is guaranteed to be asymptotically stable.
\subsection{Control Schemes and Braess's Paradox}
Consider the four-microgrid system shown in Fig. \ref{fig: schematic diagram 4 mic}a and its load-flow and dynamical data tabulated in Case (a1) of Table \ref{tab: Dynamic Parameters 4-mic}. The system is normally operating in this case, but $\mathcal{S}_i>0, \forall i\in\mathcal{N}$ and Theorem \ref{thm: stability properties} does not certify the stability of the system. Such a positive $\mathcal{S}_i$ for all microgrids indicates that the multi-$\mu$G system, albeit operating normally, is close to its stability margins. We will show how a credible contingency could push such an uncertified system into instability.

\noindent\textbf{Internal Outage Leads to Instability:} Subsequent to a generation outage inside microgrid $\mu G_4$, the active power $P_{s_4}$ changes from $-4.06$ to $-7.06$, i.e., this microgrid starts to get $3$ p.u. more active power from the linking grid to compensate for its internal outage. In response, microgrid $\mu G_3$ aids $\mu G_4$ by using its internal generation capacity and changing its active power $P_{s_3}$ from $-2.25$ to $0.25$. See Case (a2) in Table \ref{tab: Dynamic Parameters 4-mic}.
Such a smart, resilient, and self-healing multi-$\mu$G system seems very appealing and is indeed one of the main purposes of building these interconnected systems. However, as it was hinted by positive values of $\mathcal{S}_i$ (i.e., violation of condition \eqref{eq: condition for stability}), this new EP of the multi-$\mu$G system is unstable. The instability of this EP can be verified through eigenvalue analysis and time domain simulation, as depicted in Fig. \ref{fig: unstable four microgrid}. Now, Theorem \ref{thm: stability properties} offers two remedial approaches to ensure system stability.

%



\noindent\textbf{A Distributed Control Scheme:} 
The first approach is based on a distributed control rule instructing how to change the interface controller parameters $d_i$ or $m_i$ in order to improve the multi-$\mu$G stability
(recall the characterization of $m_i$ and $d_i$ for microgrids described in Section \ref{subsec: Multi-Machine Swing Equations}).
Based on local measurements of reactive power $Q_i$ and voltage $V_i$, each microgrid can increase its damping $d_i$ and/or decrease its virtual inertia to ensure that condition \eqref{eq: condition for stability} is satisfied.
The key features of the distributed control scheme include 1) by increasing $d_i^2/m_i$ the system can always be stabilized according to condition \eqref{eq: condition for stability}; 2) the operating point of the system is not changed; 3) no information exchange from the neighboring microgrids is required.  Implementing this approach, we reach to Case (a3) in Table \ref{tab: Dynamic Parameters 4-mic}. The stability of the same EP as in Case (a2) is certified.


\noindent\textbf{A Coordinated Control Scheme:} 
The second approach offers coordination of a more general set of corrective actions 
including change of interface controller parameters $d_i$ or $m_i$, change of reactive power $Q_i$ or voltage magnitude $V_i$, and change of network topology. Condition \eqref{eq: condition for stability} instructs which actions will improve the stability of the EP. The EP of the system may be moved in the coordinated control scheme to achieve corrective actions with smaller magnitude.
%
To illustrate, we choose a combination of all available options to find a stable EP. Let us reconfigure the network by switching two lines off (see Fig. \ref{fig: schematic diagram 4 mic}(b)) and also modify the dynamic parameters to reach Case (b) in Table \ref{tab: Dynamic Parameters 4-mic}. The new EP satisfies condition \eqref{eq: condition for stability} and therefore is stable. Note that by removing distribution lines from case (a), the value of $|B_{ii}|, \forall i \in \{1,2,4\}$ will decrease. Moreover, increasing damping and decreasing inertia will increase the right-hand side of \eqref{eq: condition for stability}. Consistent with Braess's Paradox, switching off two lines indeed improves system stability. 
%
%
\begin{figure}[t]
 \includegraphics*[width=2.5in, keepaspectratio=true]{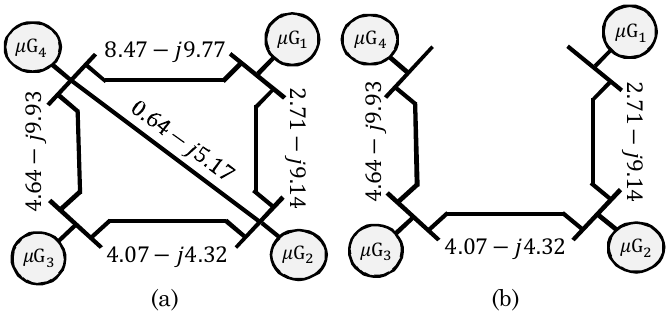}
 \centering
  \caption{Schematic diagram of four coupled microgrids.}
  \label{fig: schematic diagram 4 mic}
\end{figure}
%
\newcommand{\STAB}[1]{\begin{tabular}{@{}c@{}}#1\end{tabular}}
\begin{table}[t]
    \centering
    \caption{Dynamic parameters and converged load flow data of the four-microgrid system.}
    \label{tab: Dynamic Parameters 4-mic}
    \begin{tabular}{|c|c||c|c|c|c|c|c|}
    \hline 
     & \multicolumn{1}{c||}{$i$} & \multicolumn{1}{c|}{$m_i$} & \multicolumn{1}{c|}{$d_i$} & \multicolumn{1}{c|}{ $P_{s_i}$ [p.u.] } & \multicolumn{1}{c|}{$V_i$ [p.u.]}  &  \multicolumn{1}{c|}{$\delta_i^*$ [rad]} & \multicolumn{1}{c|}{$\mathcal{S}_i$}  \\
     \hline \hline
    \multirow{4}{*}{\STAB{\rotatebox[origin=c]{90}{Case (a1)}}}
     & $1$   & $5.76$     & $1.03$          & $13.13$      & $0.95$      &   $0.75$    & $21.18$  \\ \cline{2-8} 
     & $2$   & $9.20$      & $1.61$          & $0.39$       & $0.95$      &   $0.28$    &  $16.85$ \\ \cline{2-8} 
     & $3$   & $9.32$     & $1.86$          & $-2.25$      & $1.05$      &   $-0.18$   &  $12.12$ \\ \cline{2-8} 
     & $4$   & $4.92$     & $1.50$           & $-4.06$      & $1.05$      &   $-0.07$   &  $16.12$ \\ \hline  \hline
     \multirow{4}{*}{\STAB{\rotatebox[origin=c]{90}{Case (a2)}}}
     & $1$   & $5.76$     & $1.03$          & $13.13$        & $0.96$    &  $0.47$   & $21.17$ \\ \cline{2-8} 
     & $2$   & $9.20$      & $1.61$          & $0.39$       & $0.95$     & $0.07$    & $16.50$ \\ \cline{2-8} 
     & $3$   & $9.32$     & $1.86$          & $0.25$      & $0.99$      &   $-0.25$  & $13.08$ \\ \cline{2-8} 
     & $4$   & $4.92$     & $1.50$           & $-7.06$      & $1.02$      &   $-0.37$& $13.53$  \\ \hline  \hline
     \multirow{4}{*}{\STAB{\rotatebox[origin=c]{90}{Case (a3)}}}
     & $1$   & $0.50$     & $4.62$          & $13.13$        & $0.96$    &  $0.47$   & $-0.074$ \\ \cline{2-8} 
     & $2$   & $0.56$     & $4.32$          & $0.39$       & $0.95$     & $0.07$    &  $-0.035$ \\ \cline{2-8} 
     & $3$   & $0.66$     & $4.19$          & $0.25$      & $0.99$      &   $-0.25$  & $-0.037$ \\ \cline{2-8} 
     & $4$   & $0.56$     & $3.92$           & $-7.06$      & $1.02$      &   $-0.37$& $-0.001$  \\ \hline  \hline
     \multirow{4}{*}{\STAB{\rotatebox[origin=c]{90}{Case (b)}}}
    & $1$   & $0.80$     & $4.03$         & $5.72$      & $1.05$    &  $0.8$      &  $-0.0036$   \\ \cline{2-8}
    & $2$   & $0.56$      & $3.90$        & $0.40$       & $1.05$     & $0.24$    &  $-0.0668$   \\ \cline{2-8}
    & $3$   & $0.70$     & $3.78$         & $0.25$      & $1.05$      &   $-0.62$ &  $-0.0057$    \\ \cline{2-8}
    & $4$   & $0.68$     & $3.49$        & $-2.11$     & $0.95$      &   $-0.8$   &  $-0.0205$    \\ \hline 
  \end{tabular}
\end{table} 
%
\begin{table} [t]
\centering
\caption{Parameters to generate synthetic networks. $U([\ell_1,\ell_2])$ is uniform distribution  on interval $[\ell_1,\ell_2]$.}
\label{tab: Parameters to generate synthetic networks}
\begin{tabular}{|c||c|}
\hline
Admittances  & $b={U}([-1,0])$ [p.u.], $g=|b|\times{U}([0,0.5])$ [p.u.]  \\ \hline 
Voltages     & $V={U}([0.95,1.05])$ [p.u.], $\delta={U}([-0.5,0.5])$ [rad]        \\ \hline 
Interface settings     & $d={U}([1.5,3])$, $m={U}([0.4,2])$ [sec.]    \\ \hline
\end{tabular}
\end{table}
%
\begin{figure}
\centering
\begin{subfigure}{0.26\textwidth}
  \centering
  \includegraphics[width=\linewidth]{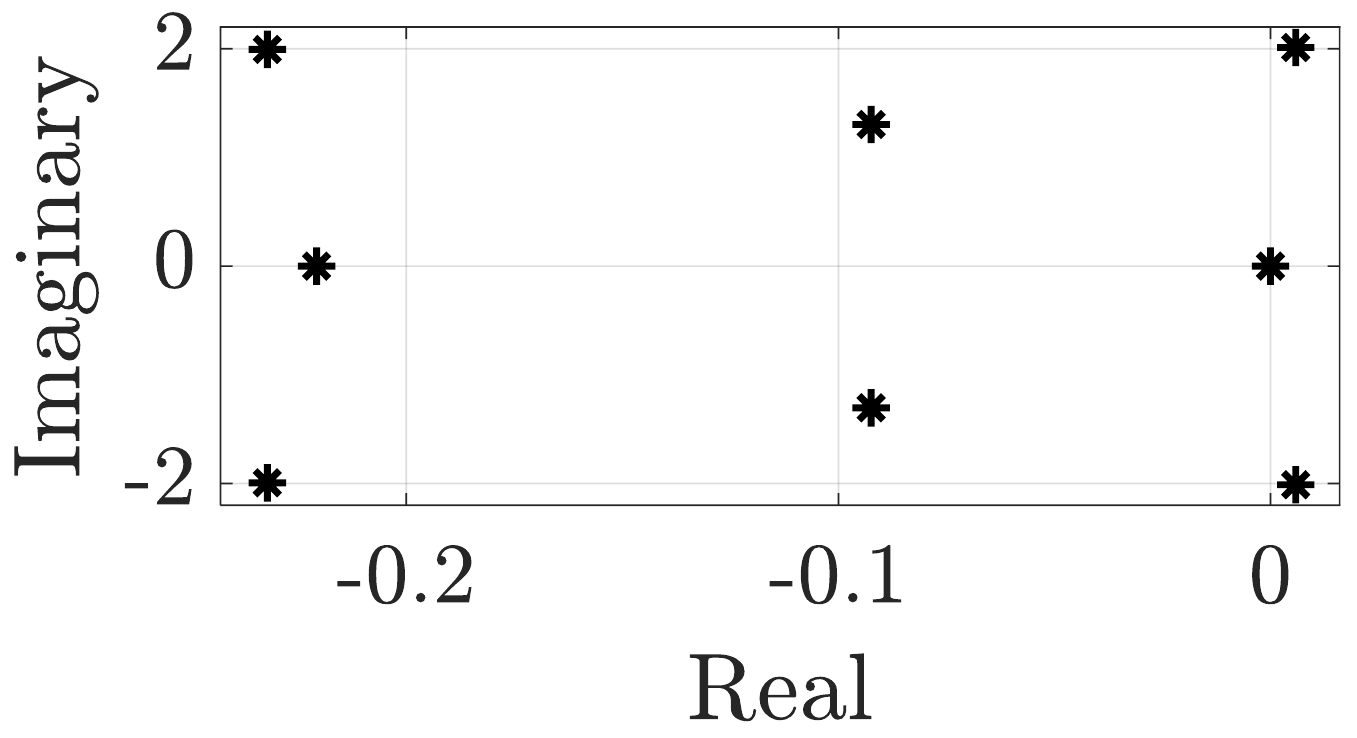}
  \caption{Spectrum of matrix $J$.}
  \label{fig: sfig1 unstable eigenvalues}
\end{subfigure}%
\begin{subfigure}{0.24\textwidth}
  \centering
  \includegraphics[width=\linewidth]{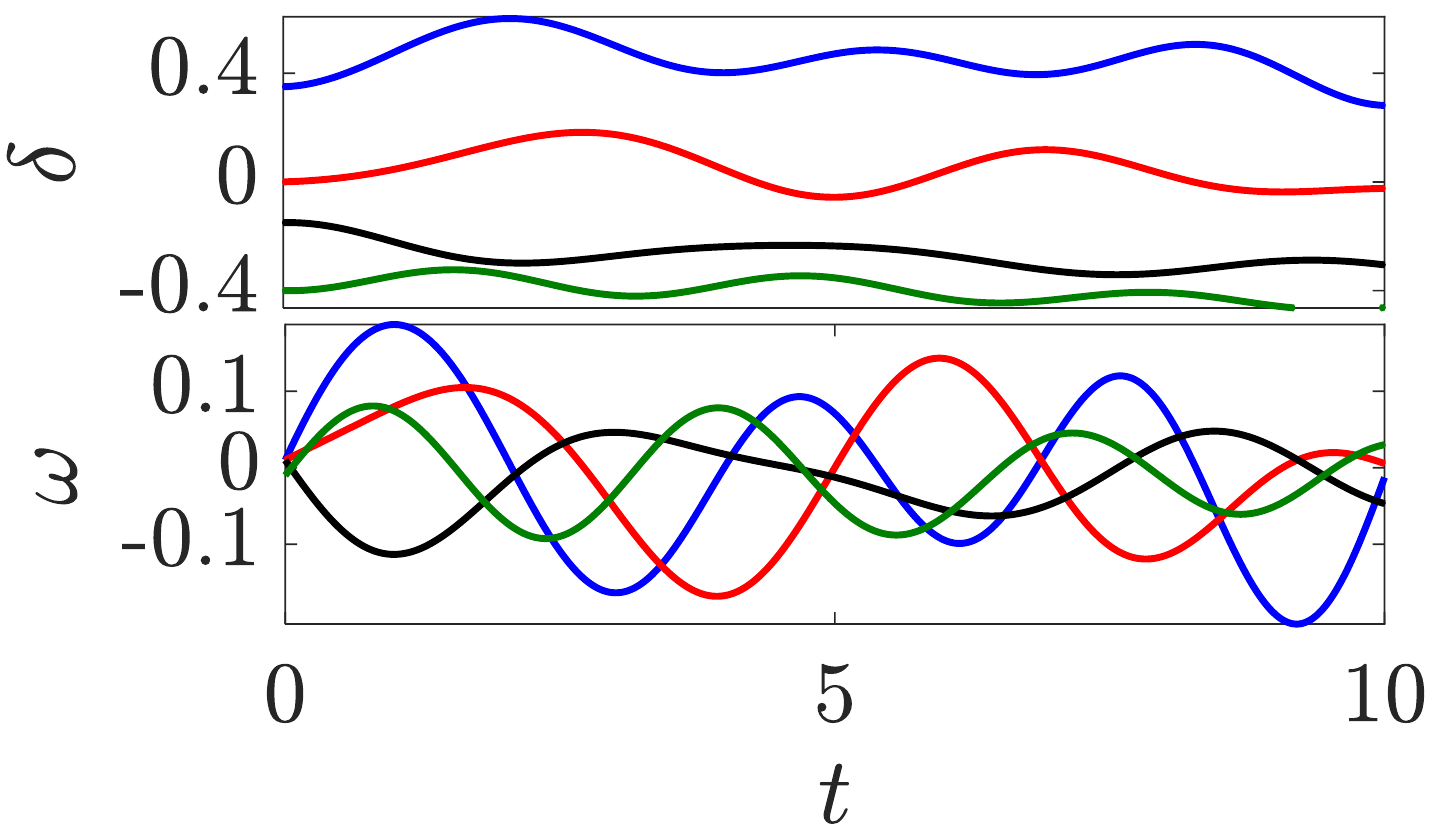}
  \caption{Orbits of the system.}
  \label{fig: sfig2 unstable time sim}
\end{subfigure}
\caption{Verifying the instability of the EP in Case (a2) of the four-microgrid system. (a) There exist two eigenvalues with positive real part. (b) Starting from a neighborhood of the EP, the orbits of the system diverge to infinity.}
\label{fig: unstable four microgrid}
\end{figure}
%

\subsection{Stability Measure and Location of Eigenvalues}
As mentioned above, condition \eqref{eq: condition for stability} can be used not only as a fast stability certificate, but also as a quantitative measure of the degree of stability. To further illustrate this, consider the IEEE 33-bus network during islanded operation, consisting of $4$ DGs and $2$ storage units interfaced via VSIs \cite{2017-gholami-proactive}.
The load, line, and DG data can be found in \cite{matpower, 1989-Baran-network, 2017-gholami-proactive}, and Table \ref{tab: ieee33 param}. Here, we first compute the Kron-reduced system to obtain a network of interconnected DGs. Note that Theorem \ref{thrm: stability of original net} is applicable to this reduced network because by Lemma \ref{lemma: VSC model reparametrization} the dynamical model of interconnected droop-controlled VSIs can be reparametrized as the swing equation model \eqref{eq: swing equations}. Observe that according to Fig. \ref{fig: histogram of angles}, the assumption $\varphi_{ik} \in (0,\pi)$ holds in this system. We assume the network is operating at $80$\% of the nominal load, and the interface parameters $k_i$, $\tau_i$, $m_i, d_i$, and setpoints are designed following Theorem \ref{thrm: stability of original net} (see Table \ref{tab: ieee33 param}). The simulations are carried out in \textsc{Matlab}.

%
Fig. \ref{fig: 33-bus stability index and eigens} shows the spectrum of matrix $J$ along with the value of $\mathcal{S}_i, \forall i\in\{1,...,6\}$ under three different operating points referred to as Cases $1$ to $3$.  In Case $1$, $\mathcal{S}_i>0$ for $i=4$ and $i=6$. Moreover, in Case $2$, $\mathcal{S}_i>0$ for $i=6$. Case $3$ is the only case where $\mathcal{S}_i\le0, \forall i\in\{1,...,6\}$, and condition \eqref{eq: stability condition orig net B} guarantees that the system is asymptotically stable in this case. 
According to this figure, in all three cases the non-zero eigenvalues of $J$ are located in the left half plane and the system is asymptotically stable. However, from Case $1$ to Case $3$, as we move towards satisfying $\mathcal{S}_i\le0, \forall i\in\{1,...,6\}$, the magnitude of the imaginary parts of the eigenvalues in $\sigma(J)$ is reduced, and their real parts are
mainly moved towards $-\infty$, thereby making the system less oscillatory. 
Indeed, a smaller value of $\mathcal{S}_i$ (say when $\mathcal{S}_i>0$) means the violation of constraint $\mathcal{S}_i \le 0$ is smaller, and it is easier to enforce condition \eqref{eq: stability condition orig net B}, and therefore to make sure we have reached stability. 
Evidently, the value of $\mathcal{S}_i$ can be seen as a stability measure, i.e., it roughly indicates how stable the system is. This application of condition \eqref{eq: stability condition orig net B} was also shown in the four-microgrid test case in the previous section.
%
Finally, Fig. \ref{fig: 33-bus freq} depicts the frequency trajectories of the system in Case $3$, where condition \eqref{eq: stability condition orig net B} holds. As can be seen, after a transient, all frequency deviations converge to zero, and the EP, which was certified by Theorem \ref{thrm: stability of original net}, is asymptotically stable. The initial condition in this simulation is chosen arbitrary within a reasonable range.

\begin{table}[]
\centering
\caption{Parameters of the IEEE 33-bus system.}
\label{tab: ieee33 param}
\begin{tabular}{|c||c|c|c|c|c|c|}
\hline 
$i$         & $1$   & $2$   & $3$   & $4$   & $5$   & $6$  \\ \hline \hline
Bus index   & $8$ & $13$& $16$& $19$& $25$& $26$  \\ \hline
DER type    & DG  & DG  & DG  & VSI  & DG  & VSI \\ \hline
$d_i$       & $1.7$  & $1.7$  & $2$  & $1$  & $2$  & $1.2$ \\ \hline
$m_i$       & $0.5$  & $0.5$  & $0.6$  & $0.7$  & $0.6$  & $0.7$ \\ \hline \hline
Base values & \multicolumn{6}{c|}{$P_{\text{base}}=100~\text{MW}$, $V_{\text{base}}=12.66~\text{kV}$ } \\ \hline
\end{tabular}
\end{table}

%
\begin{figure}[]
\centering
\begin{subfigure}[b]{0.5\textwidth}
\begin{tikzpicture}
\begin{axis}[%
width=3.4in,
height=1.25in,
xmin=-2.5,
xmax=0,
xlabel style={font=\small},
xlabel={Real},
ymin=-1,
ymax=1,
ylabel style={font=\small},
tick label style={font=\small},
ylabel={Imaginary},
xmajorgrids,
ymajorgrids,
legend style={at={(0.02,0.63)}, anchor=south west, legend columns=3, legend cell align=left, align=left, draw=white!15!black, font=\footnotesize}
]
\addplot [color=blue, line width=1.0pt, only marks, mark size=2.5pt, mark=o, mark options={solid, blue}]
  table[row sep=crcr]{%
-0.586068523172822	0.910338836769889\\
-0.586068523172822	-0.910338836769889\\
-1.64499577513385	0\\
-1.52567198991446	0\\
-1.21021898531869	0\\
-0.816445660549694	0.500697323714555\\
-0.816445660549694	-0.500697323714555\\
-0.414160216094267	0.49974683185105\\
-0.414160216094267	-0.49974683185105\\
-0.23421121236361	0\\
2.12135495973183e-16	0\\
-0.0563151423977254	0\\
};
\addlegendentry{Case 1}

\addplot [color=red, line width=1.0pt, only marks, mark size=2.5pt, mark=triangle, mark options={solid, red}]
  table[row sep=crcr]{%
-2.02142134625104	0\\
-1.9084095668771	0\\
-1.62450682577512	0\\
-0.674536437129843	0.88922569823875\\
-0.674536437129843	-0.88922569823875\\
-1.00257346885259	0.125650978905842\\
-1.00257346885259	-0.125650978905842\\
-0.495095408083573	0.46690483496436\\
-0.495095408083573	-0.46690483496436\\
-0.200854169892951	0\\
-1.54627203174131e-16	0\\
-0.0506620133363201	0\\
};
\addlegendentry{Case 2}

\addplot [color=brown, line width=1.0pt, only marks, mark size=2.5pt, mark=asterisk, mark options={solid, brown}]
  table[row sep=crcr]{%
-2.49098366081536	0\\
-2.38193251794339	0\\
-2.13132663272671	0\\
-1.85412912937072	0\\
-0.781499332535502	0.859520240788613\\
-0.781499332535502	-0.859520240788613\\
-0.596297661723236	0.402855606168084\\
-0.596297661723236	-0.402855606168084\\
-0.619578471963335	0\\
-0.177504690177251	0\\
3.98939538645847e-16	0\\
-0.0460937656286247	0\\
};
\addlegendentry{Case 3}
\end{axis}

\begin{axis}[%
width=3.4in,
height=1.5in,
xmin=0,
xmax=1,
ymin=0,
ymax=1,
axis line style={draw=none},
ticks=none,
axis x line*=bottom,
axis y line*=left
]
\end{axis}
\end{tikzpicture}%
\end{subfigure}
%
\begin{subfigure}[b]{0.5\textwidth}
   \begin{tikzpicture}
\begin{axis}[
	x tick label style={/pgf/number format/1000 sep=},
	xlabel style={font=\small},
	ylabel style={font=\small},
    tick label style={font=\small},
	ylabel=$\mathcal{S}_i$,
	xlabel=$i$,
	legend style={at={(0.5,1.25)}, anchor=north,legend columns=-1, font=\footnotesize },
	ybar,
	bar width=5pt,
	scaled ticks=true,
	width=3.4in,
	height=1.25in
]

\addplot coordinates {(1,-0.519571111197457)	(2,-0.809441061604676)	(3,-1.21503045346939)	(4,0.0973912144208389)	(5,-1.38294950694226) (6,0.411364951490601)};
\addplot coordinates {(1,-1.01729333341968)	(2,-1.30716328382690)	(3,-1.78910452754346)	(4,-0.0256246585950342)	(5,-1.95702358101633)	(6,0.234222094347744)};
\addplot coordinates {(1,-1.67557111119746)	(2,-1.96544106160468)	(3,-2.54836378680272)	(4,-0.188323071293447)	(5,-2.71628284027559)	(6,-6.36199379702163e-05)};
\legend{Case 1, Case2 , Case 3}

\addplot[cyan,sharp plot,update limits=false] 
	coordinates {(0,0) (7,0)} ;

\end{axis}
\end{tikzpicture}
\end{subfigure}
\caption{ Illustration of stability certificate on the IEEE 33-bus system. 
(a) Spectrum of matrix $J$. (b) Value of stability index $\mathcal{S}_i$ in different buses.}
\label{fig: 33-bus stability index and eigens}
\end{figure}
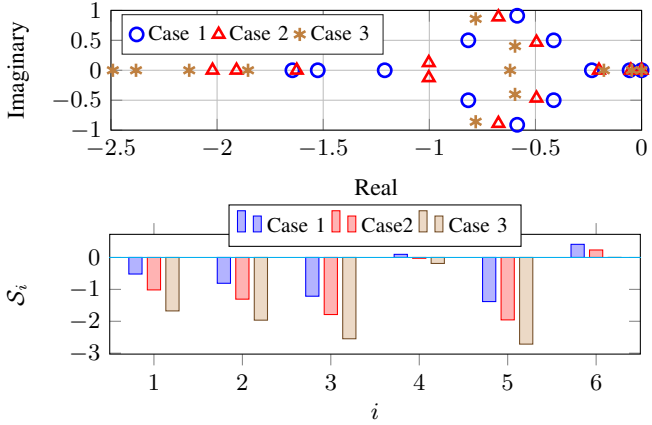


\begin{figure}
\centering
  \centering
  \includegraphics[width=3.5in, keepaspectratio=true]{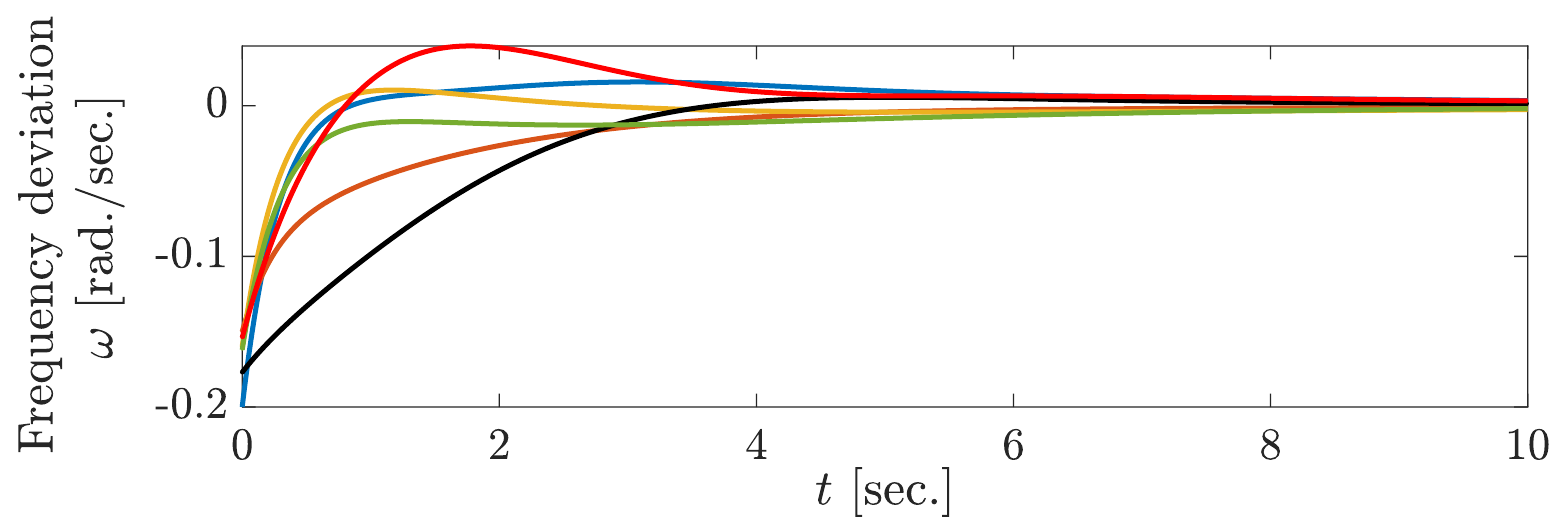}
  \caption{Trajectories of the frequency deviation $\omega_i$ for $6$ DERs in the IEEE $33$-bus system.}
  \label{fig: 33-bus freq}
\end{figure}
\vspace{-2mm}


\subsection{Larger-Scale Systems}
Next, we test the proposed stability certificate on a set of large-scale synthetic networks. Figs. \ref{fig: 50-microgrid network} and \ref{fig: 100-microgrid network} show two examples of such multi-$\mu$G networks consisting of $50$ and $100$ microgrids, respectively. The network graphs are randomly generated, the sparsity patterns of their adjacency matrix are depicted in Figs. \ref{fig: adj 50-microgrid network} and \ref{fig: adj 100-microgrid network}, and the corresponding static and dynamic parameters are given Table \ref{tab: Parameters to generate synthetic networks}. Note also that the diameter (i.e., the longest graph geodesic) of the graphs \ref{fig: 50-microgrid network} and \ref{fig: 100-microgrid network} are $6$ and $8$, respectively.
Adopting the aforementioned distributed control rule, each microgrid adjusts its controller parameters $d_i$ and $m_i$ (within the permissible range) to meet condition \eqref{eq: condition for stability}. Obeying this rule at an EP guarantees that all nonzero eigenvalues of the Jacobian matrix $J$ have negative real part, and consequently, the EP is locally asymptotically stable (see Figs. \ref{fig: eigens 50-microgrid network} and \ref{fig: eigens 100-microgrid network}).  
%
%
\begin{figure*}[]
\centering
\begin{subfigure}[b]{0.144\textwidth}
   \includegraphics*[width=\textwidth]{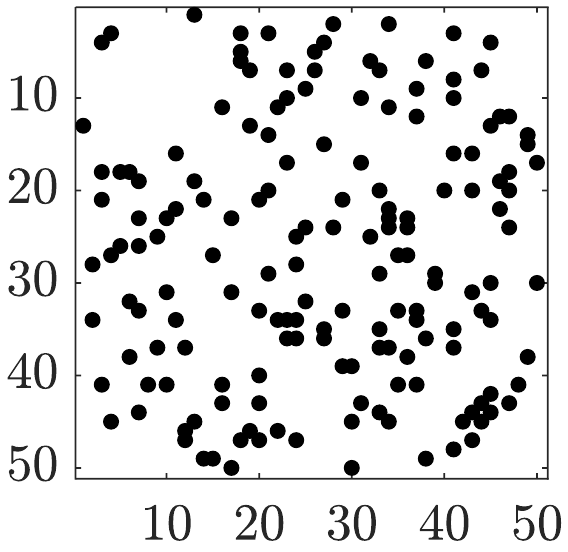}
   \caption{}
   \label{fig: adj 50-microgrid network}
\end{subfigure}
\begin{subfigure}[b]{0.21\textwidth}
   \includegraphics*[width=\textwidth]{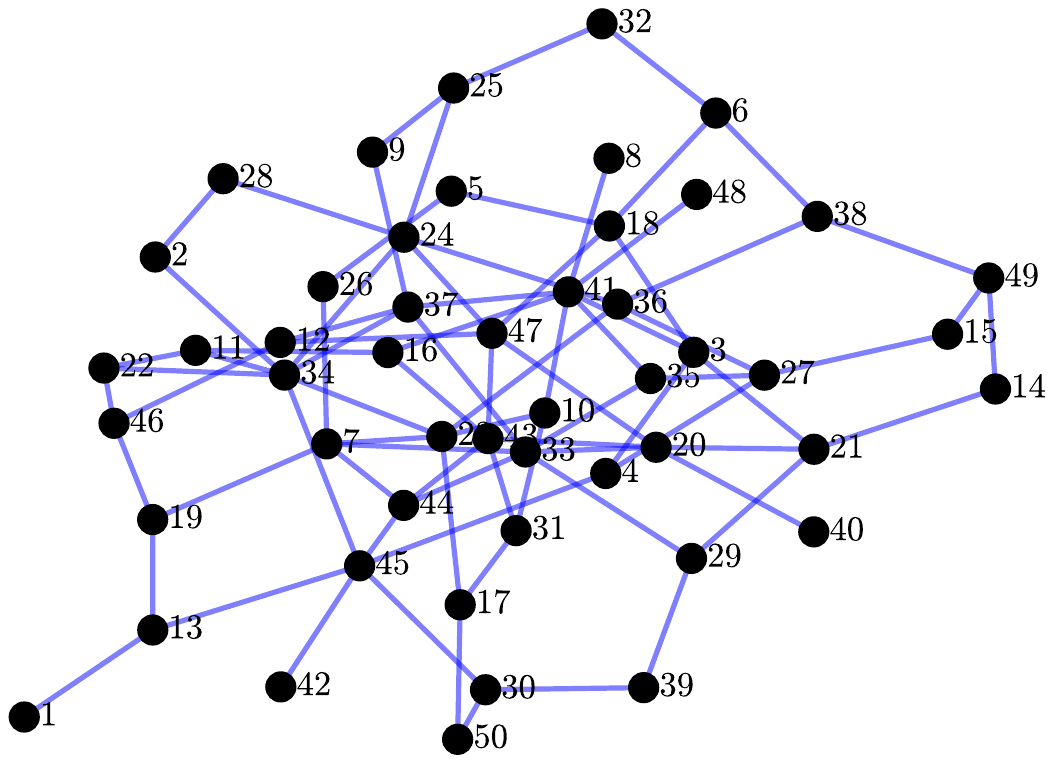}
   \caption{}
   \label{fig: 50-microgrid network} 
\end{subfigure}
\begin{subfigure}[b]{0.34\textwidth}
   \includegraphics*[width=\textwidth]{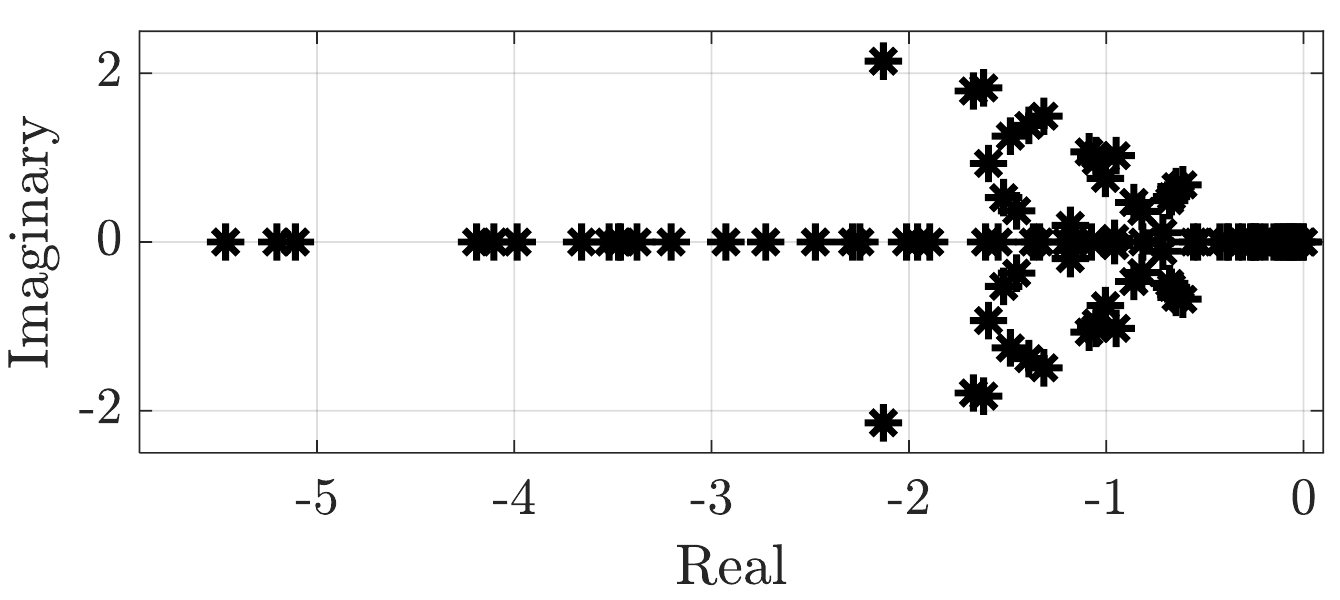}
   \caption{}
   \label{fig: eigens 50-microgrid network}
\end{subfigure}
%
%

\begin{subfigure}[b]{0.151\textwidth}
   \includegraphics*[width=\textwidth]{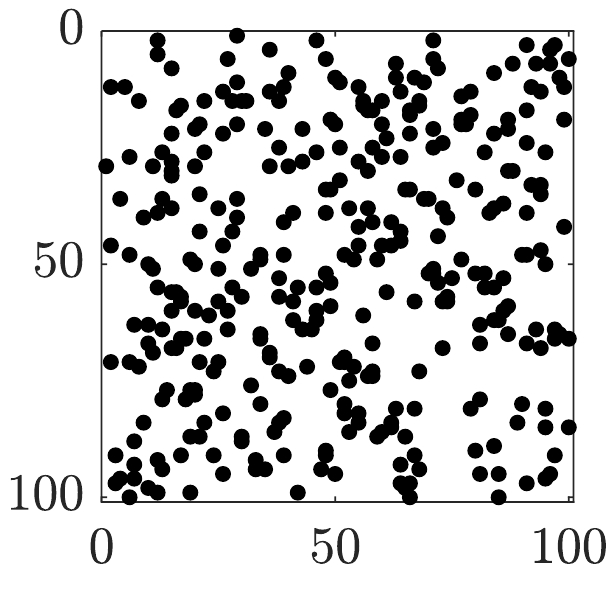}
   \caption{}
   \label{fig: adj 100-microgrid network}
\end{subfigure}
\begin{subfigure}[b]{0.21\textwidth}
   \includegraphics*[width=\textwidth]{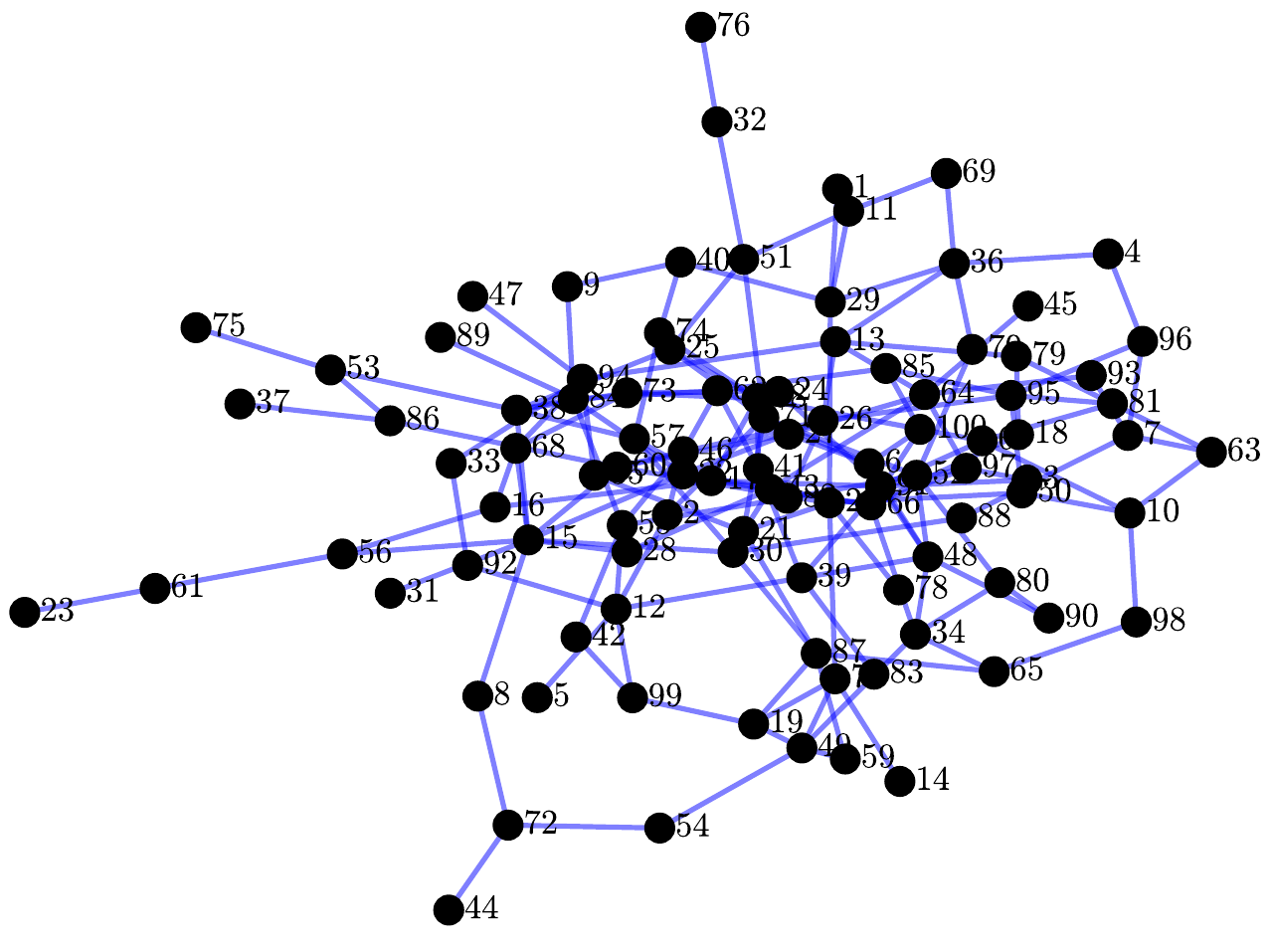}
   \caption{}
   \label{fig: 100-microgrid network}
\end{subfigure}
\begin{subfigure}[b]{0.34\textwidth}
   \includegraphics*[width=\textwidth]{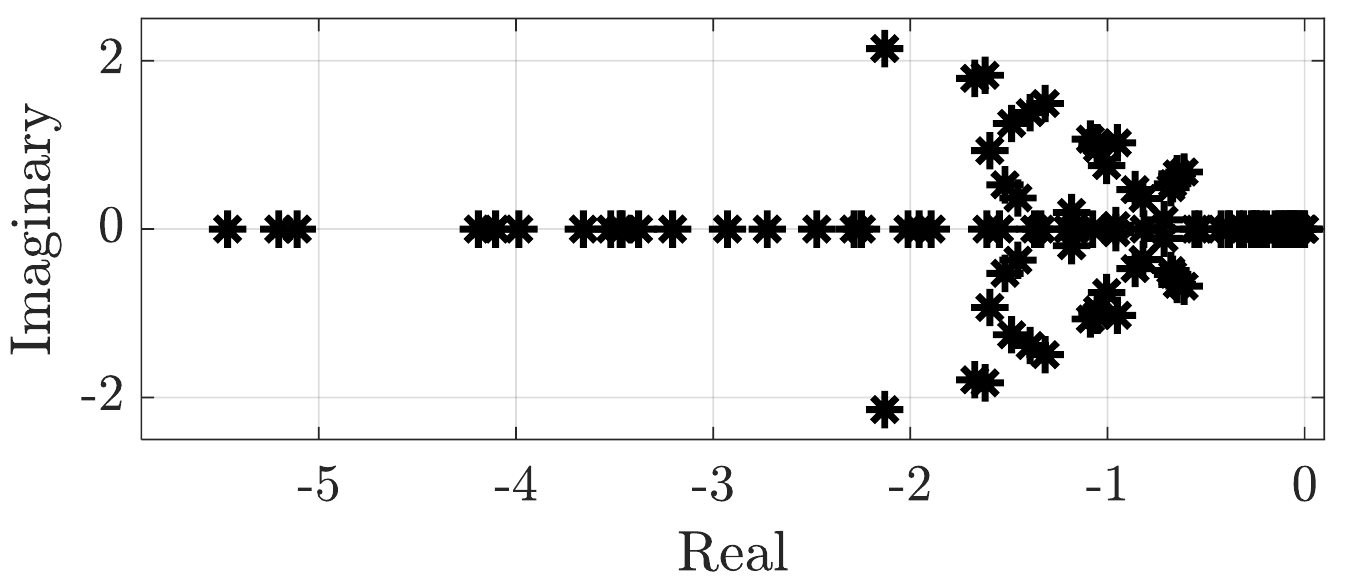}
   \caption{}
   \label{fig: eigens 100-microgrid network}
\end{subfigure}
\caption{Synthetic multi-$\mu$G networks satisfying condition \eqref{eq: condition for stability}. (a) Sparsity pattern of the $50$-microgrid adjacency matrix, black points are ones. (b) $50$-microgrid network. (c) Eigenvalues of the $50$-microgrid network. 
(d) Sparsity pattern of the $100$-microgrid adjacency matrix, black points are ones.
(e) $100$-microgrid network. (f) Eigenvalues of the $100$-microgrid network. }
\label{fig: 50 and 100 mic networks}
\end{figure*}

\section{Conclusions} \label{Sec: Conclusions}
This paper proposes new stability certificates for the small-signal stability of multi-$\mu$Gs. 
In particular, we prove in Theorem \ref{thm: stability properties} that an EP of a multi-$\mu$G system is locally asymptotically stable if either i) the network is lossless; or ii) in a lossy network, a local condition (i.e., condition \eqref{eq: condition for stability}) is satisfied at each microgrid PCC/DER, which roughly speaking requires:
\begin{align*}
    \Big( {\begin{array}{*{20}{c}}
    {\text{Reactive power}}\\
    {\text{absorption}}
    \end{array}} \Big)
    + \frac{\text{Voltage magnitude}}{\text{Line reactance}} \le \frac{\text{Damping}^2}{2\cdot\text{Inertia}}. 
\end{align*}
%
This condition sheds new light on the interplay of system stability, network topology, and dynamic parameters. It also provides a fully distributed control scheme that is guaranteed to stabilize the multi-$\mu$G system. The new certificate also reveals an analog of Braess's Paradox in multi-$\mu$G control that adding more lines in the linking grid may worsen system stability, and switching off lines may improve stability margin.
 The proposed condition in Theorems \ref{thm: stability properties} and \ref{thrm: stability of original net} can improve the situational awareness of system operators by providing a fast stability certificate as well as showing how different corrective actions would make the EP stable.
 %
%
In the literature, several stability criteria are developed  based on various dynamical models, focusing on different aspects of stability. Finding a proper way to compare and merge these criteria and deriving a unified stability criterion will be an interesting direction for future work, and the framework proposed in \cite{2017-vorobev-framework, 2018-Vorobev-Conference-Plug, 2019-Vorobev-Plug} is a promising step towards this direction.

\appendices
\vspace{-2mm}
\section{Proof of Proposition \ref{prop: geometric nullity of J and J11}} \label{proof of prop: geometric nullity of J and J11}
    \begin{proof}[proof of (i)]
     Let $ (v_1 , v_2) \in \mathbf{ker}(J)$ where $v_1, v_2 \in \mathbb{R}^n$. Then	
	\begin{align} \label{eq: kernel of J}
	\begin{bmatrix}
	0 & I \\
	- M^{-1} L     &     -M^{-1}D \\
	\end{bmatrix}   \begin{bmatrix} v_1 \\v_2  \end{bmatrix}  = 0,
	\end{align}
	which implies that $ v_2 = 0$ and $M^{-1}L v_1 = 0$. Since $M$ is non-singular, $Lv_1=0$, i.e. $v_1 \in \mathbf{ker}(L)$. Therefore, $\mathbf{proj} (\mathbf{ker}(J) ) \subseteq \mathbf{ker}(L)$.
	Conversely, let $v_1 \in \mathbf{ker}(L)$. Set $v_2=0$. Then $ (v_1 , v_2) \in \mathbf{ker}(J)$ as it satisfies \eqref{eq: kernel of J}.  
    \end{proof}
	\begin{proof}[proof of (ii) and (iii)]
 	 From part (i) of this proposition, we know that $(v,0)\in \mathbf{ker} (J) \iff v\in \mathbf{ker} (L)$. Therefore, $\{(v_1,0),\cdots,(v_m,0)\}$ is a set of linearly independent eigenvectors in $ \mathbf{ker} (J)$ if and only if $\{v_1,\cdots,v_m\}$ is a set of linearly independent eigenvectors in $ \mathbf{ker} (L)$, i.e., $\dim( \mathbf{ker} (J))=\dim( \mathbf{ker} (L))$. Finally, part (iii) is an immediate consequence of either of the first two parts. 
	\end{proof}
\vspace{-2mm}
\section{Proof of Theorem \ref{thm: stability properties}} \label{app: proof of thm: stability properties}
\begin{proof}[proof of (\ref{thrm-a})]
      This is an immediate consequence of Propositions \ref{prop: geometric nullity of J and J11} and \ref{prop: quasi-M}.  
\end{proof}      
\begin{proof}[proof of (\ref{thrm-b})]
See \cite[Proof of Theorem 1]{2020-fast-certificate}.
\end{proof}
\begin{proof}[proof of (\ref{thrm-c})]
The result holds for real nonzero eigenvalues of $J$, as shown in the previous part.
Now let $\lambda\in\mathbb{C},  \lambda\in\sigma(J)$, then according to Lemma \ref{lemma: relation between ev J and ev J11}, $\exists v\in\mathbb{C}^n,v\ne0$ such that
        \begin{align} \label{eq: pencil singularity theorem proof}
            \left( L + \lambda D + \lambda^2 M \right) v = 0.
        \end{align}
It is always possible to normalize $v$ such that $\max_{i\in\mathcal{N}} |v_i| = 1$.
Here and in the rest of this proof, if $x\in\mathbb{C}$, then $|x|$ denotes the modulus of $x$.
Let $k:=  \mathrm{argmax}_{i\in\mathcal{N}} |v_i|$, and spell out the $k$-th row of \eqref{eq: pencil singularity theorem proof}:
        \begin{align} \label{eq: pencil singularity theorem proof k-th row rewrite}
            L_{kk}v_k + \lambda d_k v_k + \lambda^2 m_kv_k   = -\sum_{i=1, i\ne k}^n L_{ki}v_i.
        \end{align}
Using the triangle inequality, we have
\begin{align*}
  \bigl\lvert-\sum_{i=1, i\ne k}^n L_{ki}v_i \bigr\rvert \le  
  \sum_{i=1, i\ne k}^n \bigl\lvert L_{ki}\bigr\rvert \bigl\lvert v_i \bigr\rvert \le 
  \sum_{i=1, i\ne k}^n \bigl\lvert L_{ki}\bigr\rvert.
\end{align*}
Define $\mathcal{R}:=\sum_{i=1, i\ne k}^n \bigl\lvert L_{ki}\bigr\rvert$.
Now assume that $\lambda=\alpha+j\beta$ with $\alpha\ge0, \beta \ne 0$ is a nonzero eigenvalue of $J$, and let us lead this assumption to a contradiction. Recall $|v_k|=\|v\|_\infty=1$. Equation \eqref{eq: pencil singularity theorem proof k-th row rewrite} implies that
\begin{align*} 
 \mathcal{R}^2 \ge & \bigl\lvert  L_{kk}v_k + \lambda d_k v_k + \lambda^2 m_kv_k \bigr\rvert^2 
 =  \bigl\lvert  L_{kk} + \lambda d_k  + \lambda^2 m_k \bigr\rvert^2  \\
 =& L_{kk}^2 + (\alpha d_k + (\alpha^2 -\beta^2) m_k)^2 + 2 L_{kk} (\alpha d_k +\alpha^2 m_k) \\
  & - 2 L_{kk}\beta^2 m_k  + 4\alpha^2\beta^2 m_k^2 +\beta^2 d_k^2 + 4\alpha\beta^2 m_k d_k.
\end{align*}
Recall that if $(\delta^{*},\omega^{*})\in \Omega$, matrix $L$ has zero row sum, i.e., $\mathcal{R} = L_{kk}$. By cancelling $\mathcal{R}^2$ and $L_{kk}^2$ terms and moving $2L_{kk}\beta^2 m_k$ and $\beta^2 d_k^2$ to the left-hand side, we arrive at
\begin{align} \label{eq: ineq in proof of lossy}
\notag  \beta^2 (2L_{kk} m_k - d_k^2) \ge &  (\alpha d_k + (\alpha^2 -\beta^2) m_k)^2 \\
\notag  &   + 2 L_{kk} (\alpha d_k +\alpha^2 m_k) \\
  &   + 4\alpha^2\beta^2 m_k^2 + 4\alpha\beta^2 m_k d_k.
\end{align}
Now, note that the outgoing reactive power flow at PCC $k$ is 
\begin{align} \label{eq: reactive power flow}
   \notag Q_k &= - \sum \limits_{i=1}^n { V_k V_i Y_{ki} \sin \left( {\theta _{ki} - {\delta _k^*} + {\delta _i^*}} \right) } \\
  \notag  &= - V_k^2 B_{kk} - \sum \limits_{i=1, i\ne k}^n { V_k V_i Y_{ki} \sin \left( {\theta _{ki} - {\delta _k^*} + {\delta _i^*}} \right) } \\
    &= - V_k^2 B_{kk} - L_{kk} ,
\end{align}
where $B_{kk}=Y_{kk}\sin(\theta_{kk})$. Therefore condition \eqref{eq: condition for stability} implies that $(2L_{kk} m_k - d_k^2)\le0$, 
thus the left-hand side of the inequality \eqref{eq: ineq in proof of lossy} is nonpositive. If $\alpha\ge0$ and $\beta\ne0$, the right-hand side of \eqref{eq: ineq in proof of lossy} will be positive, which is the desired contradiction.
%
According to Proposition \ref{prop: geometric nullity of J and J11}, the simple zero eigenvalue of the Jacobian matrix $J$ stems from the translational invariance of the flow function \eqref{eq: flow function}. As mentioned earlier, we can eliminate this eigenvalue by choosing a reference bus and refer all other bus angles to it. Therefore, the set of EPs $\{ \delta^* + \alpha \mathbf{1}: \alpha \in \mathbb{R} \}$ will collapse into one EP. Such an EP will be asymptotically stable.  
\end{proof}
\begin{proof}[proof of (\ref{thrm-d})]
    Let $\lambda \in \sigma(J)$, then according to Lemma \ref{lemma: relation between ev J and ev J11}, $\exists v\ne0$ such that $( L + \lambda D + \lambda^2 M ) v = 0.$
    According to part (\ref{thrm-b}) of this proof, if the eigenvalue $\lambda$ is a real number, the desired result holds. We complete the proof in two steps:

    \noindent \textbf{Step 1:} First, we prove that the eigenvalues of $J$ cannot be purely imaginary. 
	Suppose $\lambda  = j \beta \in \sigma(J)$ for some nonzero real $\beta$. Let $v  =  x + j y$, then $((L - \beta^2 M) + j\beta D) (x+jy) = 0$,
	which can be equivalently written as
	\begin{align}
	\begin{bmatrix}
	L - \beta^2 M & -\beta D \\
	\beta D & L - \beta^2 M
	\end{bmatrix} \begin{bmatrix}
	x\\y
	\end{bmatrix} = \begin{bmatrix}
	0\\0
	\end{bmatrix}.
	\end{align}
	Define the matrix 
	\begin{align} \label{eq:M}
	H(\beta) := \begin{bmatrix}
	\beta D &  L  - \beta^2 M\\
	 L  - \beta^2 M & -\beta D
	\end{bmatrix}. 
	\end{align}
	For a lossless network, $L \succeq 0$ 
	(see Section \ref{subsec: digraph}). Thus, $H(\beta)$ is a symmetric matrix.
	Since $D\succ 0$, the determinant of $H(\beta)$ can be expressed using Schur complement as
	\begin{align*}
	 \mathbf{det}(H(\beta)) & =  \mathbf{det} (-\beta D) \mathbf{det} (\beta D \\
	  &+ \beta^{-1} ( L  - \beta^2 M )D^{-1}(  L  - \beta^2 M)).
	\end{align*}
	Define the following matrices for the convenience of analysis: 
 	\begin{align*}
 		A(\beta) &:=  L  - \beta^2 M, \\
 		B(\beta) &:= D^{-1/2}A(\beta)D^{-1/2},\\
 		E(\beta) &:= I + \beta^{-2} B(\beta)^2.
 	\end{align*}
	The inner matrix of the Schur complement can be written as
		\begin{align*}
		& \beta D + \beta^{-1} ( L  - \beta^2 M )D^{-1}(  L  - \beta^2 M) \\ 
	 & =  \beta D^{1/2} (I + \beta^{-2} D^{-1/2}A(\beta)D^{-1}A(\beta)D^{-1/2})D^{1/2} \\
	 & =  \beta D^{1/2} (I + \beta^{-2} B(\beta)^2)D^{1/2} = \beta D^{1/2} E(\beta) D^{1/2}.
		\end{align*}
	Notice that {$E(\beta)$ and $B(\beta)$ have the same eigenvectors and the eigenvalues of $E(\beta)$ and $B(\beta)$ have a one-to-one correspondence: $\mu$ is an eigenvalue of $B(\beta)$ if and only if $1 + \beta^{-2} \mu^2$ is an eigenvalue of $E(\beta)$}. Indeed, we have 
	 $E(\beta)v = v + \beta^{-2}B(\beta)^2v = v + \beta^{-2}\mu^2 v = (1+\beta^{-2}\mu^2)v$ for any eigenvector $v$ of $B(\beta)$ with eigenvalue $\mu$. 
	Since $B(\beta)$ is symmetric, $\mu$ is a real number. Hence, $E(\beta)$ is positive definite (because $1+\beta^{-2}\mu^2>0$), therefore $H(\beta)$ is nonsingular for any real nonzero $\beta$. Then the eigenvector $v=x+jy$ is zero which is a contradiction. This proves that $J$ has no eigenvalue on the punctured imaginary axis.\\
	\textbf{Step 2:}
	 Second, for a complex eigenvalue $\alpha + j\beta$ of $J$ with $\alpha\ne 0, \beta\ne 0$, by setting $v  =  x + j y$, the pencil singularity equation becomes
	\begin{align*}
	( L  + (\alpha + j\beta) D + (\alpha^2 - \beta^2 + 2\alpha\beta j)M)(x+jy) = 0.
	\end{align*}
	Similar to Step 1 of the proof, define the matrix $H(\alpha,\beta)$ as
	\begin{align*}
	\begin{bmatrix}
	 L  + \alpha D + (\alpha^2-\beta^2) M & -\beta(D + 2\alpha M) \\
	\beta(D+2\alpha M) &  L  + \alpha D + (\alpha^2-\beta^2) M
	\end{bmatrix}.
	\end{align*}
	We only need to consider two cases, namely i) $\alpha > 0, \beta > 0$ or ii) $\alpha <0, \beta > 0$. For the first case, $\beta(D+{2}\alpha M)$ is invertible and positive definite, therefore, we only need to look at the invertibility of the Schur complement
	\begin{align*}
	& S(\alpha,\beta) + T(\alpha,\beta) S^{-1}(\alpha,\beta) T(\alpha,\beta),
	\end{align*}
	where $S(\alpha,\beta):= \beta(D+{2}\alpha M)$ and $T(\alpha,\beta):= L  + \alpha D + (\alpha^2-\beta^2) M$.
	Using the same manipulation as in Step 1 of the proof, we can see that the Schur complement is always invertible for any $\alpha >0, \beta>0$. This implies the eigenvector $v$ is 0, which is a contradiction. Therefore, the first case is not possible. So any complex nonzero eigenvalue of $J$ has negative real part. 
\end{proof}
\vspace{5mm}
\section{Proof of Lemma \ref{lem: assumption closed Kron}}\label{sec: proof of lem: assumption closed Kron}
%
%
\begin{proof}
Consider the nodal admittance matrix $Y\in\mathbb{C}^{n \times n}$ which satisfies Assumptions \ref{as: y_matrix sign} and \ref{as: y_matrix bound}. Let $Y$ induce a network $\mathcal{G} = (\mathcal{N}, \mathcal{E})$ with the set of active nodes $\alpha \subset \mathcal{N}$ and passive nodes $\beta  = \mathcal{N} \setminus \alpha$. 
%
%
%
According to Definition \ref{def: Kron}, the Kron reduced matrix after removing node $k_0\in\beta$ is
$Y^{r}\in\mathbb{C}^{(n-1)\times(n-1)}$ defined as
\begin{align} \label{eq: iterative Kron}
	Y^{r}_{ik} = Y_{ik} - {Y_{ik_0}Y_{k_0 k}} /{Y_{k_0 k_0}}, \: \forall i,k\ne k_0
\end{align}
%
First, we prove that $Y^r$ satisfies Assumption \ref{as: y_matrix sign}.
Recall that the following two classes of matrices are invariant under Kron reduction \cite{2012-dorfler-kron}: i) matrices with zero row sum; ii) symmetric matrices. In other words, $Y^{r}$ is a symmetric matrix with zero row sum.
Hence, we can restrict our analysis to off-diagonal entries, and aim to prove that $Y^r=G^r+jB^r$ satisfies $G_{ik}^{r} \le0, B_{ik}^{r}\ge0,$ for all $i\ne k$.
%
%
%
%
Consider $Y_{ik} = G_{ik} + j B_{ik}$ with $G_{ik}\le0$ and $B_{ik}\ge0$
and note that
for off-diagonal entries $Y^{r}_{ik}, i\ne k$, we have 
\begin{align*}
	Y_{ik} - &Y^{r}_{ik} =   {Y_{ik_0}Y_{k_0 k}} /{Y_{k_0 k_0}} \\
	& =  (G_{ik_0} + j B_{ik_0})(G_{k_0 k} + j B_{k_0 k}) / (G_{k_0 k_0 } + j B_{k_0 k_0}) \\
	%
	& =  
	(	(G_{ik_0}G_{k_0 k} - B_{ik_0}B_{k_0 k}) \\
	&\hspace{3mm} +j(G_{ik_0}B_{k_0 k} + B_{ik_0}G_{k_0 k})	) (G_{k_0 k_0 } - j B_{k_0 k_0}) /\eta,
	%
	%
	%
	%
\end{align*}
where $\eta = G_{k_0 k_0 }^2 +  B_{k_0 k_0}^2$. 
Observe that
\begin{align*}
\Im(
{Y_{ik_0}  Y_{k_0 k}}  /{Y_{k_0 k_0}}
) \eta 
    =& G_{k_0 k_0 }G_{ik_0}B_{k_0 k} + G_{k_0 k_0 }B_{ik_0}G_{k_0 k} \\
	 & -B_{k_0 k_0}(G_{ik_0}G_{k_0 k} - B_{ik_0}B_{k_0 k}) 
	\le 0,
\end{align*}
where the inequality holds 
because under Assumptions \ref{as: y_matrix sign} and \ref{as: y_matrix bound}, we have
\begin{align*}
   &G_{k_0 k_0 }G_{ik_0}B_{k_0 k} \le 0,
   G_{k_0 k_0 }B_{ik_0}G_{k_0 k} \le 0,\\
   &G_{ik_0}G_{k_0 k} - B_{ik_0}B_{k_0 k} \le 0,
   -B_{k_0 k_0} \ge 0. 
\end{align*}
This proves that $B_{ik}^{r}\ge0,$ for all $i\ne k$.
Also observe that
\begin{align*}
  \Re(
  {Y_{ik_0}Y_{k_0 k}} & /{Y_{k_0 k_0}}
  ) \eta\\
	&= G_{k_0 k_0}G_{ik_0}G_{k_0 k} - G_{k_0 k_0}B_{ik_0}B_{k_0 k}\\
	&\hspace{4mm} +B_{k_0 k_0}G_{ik_0}B_{k_0 k} + B_{k_0 k_0}B_{ik_0}G_{k_0 k}\\
	&\ge  G_{k_0 k_0}G_{ik_0}G_{k_0 k}   - \nu_{\max}^2 G_{k_0 k_0}G_{ik_0}G_{k_0 k} \\
	& \hspace{4mm}+ \nu_{\min}^2  G_{k_0 k_0}G_{ik_0}G_{k_0 k} + \nu_{\min}^2  G_{k_0 k_0}G_{ik_0}G_{k_0 k} \\
	&=(1+2\nu_{\min}^2-\nu_{\max}^2) G_{k_0 k_0}G_{ik_0}G_{k_0 k} \ge 0,
	%
\end{align*}
where the inequality holds
because under Assumptions \ref{as: y_matrix sign} and \ref{as: y_matrix bound}, we have $1+2\nu_{\min}^2-\nu_{\max}^2 \ge 0$  and $G_{k_0 k_0} G_{ik_0} G_{k_0 k} \ge 0$.
This shows that $G_{ik}^{r} \le0$ for all $i\ne k$, and completes the first part of proof.
Next, we prove that $B^r_{kk}\ge B_{kk}$ for all $k\ne k_0$. Observe that
\begin{align*}
\Im(Y^{r}_{kk}  - Y_{kk}) {\eta}
%
 = {(G_{k k_0}^2 - B_{k k_0}^2)B_{k_0 k_0}}  
- {2G_{k_0 k_0}G_{k k_0}B_{k k_0}}.
\end{align*}
According to Assumption \ref{as: y_matrix sign}, we have $G_{k_0 k_0}\ge0, B_{k_0 k_0} \le 0$, $G_{kk_0}\le0, B_{kk_0} \ge 0, \forall k\ne k_0$. Assumption \ref{as: y_matrix bound} says 
$|G_{k k_0}| \le |B_{k k_0}|$.
Hence ${  (G_{kk_0}^2 - B_{k k_0 }^2)B_{k_0 k_0}   }  \ge 0$ and $-{  2G_{k_0 k_0 }G_{k k_0 }B_{k k_0}   }  \ge 0.$
This implies that $B^{r}_{kk} \ge B_{kk}$, and completes the proof.
\end{proof}
\vspace{5mm}
\section{Proof of Theorem \ref{thrm: stability of original net}}
\label{sec: proof of thrm: stability of original net}

\begin{proof}
Let the nodal admittance matrix of $\mathcal{G}^d$ be $Y\in\mathbb{C}^{n \times n}$ which satisfies Assumptions \ref{as: y_matrix sign} and \ref{as: y_matrix bound}. Suppose $\mathcal{G}^d$ has the set of active nodes $\alpha \subset \mathcal{N}^d$ and passive nodes $\beta  = \mathcal{N}^d \setminus \alpha$. 
After properly labeling the nodes, we can have $\beta = \{n-|\beta| + 1, \cdots , n\}$. 
In order to get the admittance matrix $Y^r$ of the Kron reduced network $\mathcal{G}^r$, we need to remove the set of passive nodes $\beta$ according to Definition \ref{def: Kron},
%
%
%
and this can be accomplished by constructing a sequence of matrices $\{Y^{(\ell)}\}_{\ell =1}^{|\beta|}$, where $Y^{(\ell)}\in\mathbb{C}^{(n-\ell)\times(n-\ell)}$ is defined as
\begin{align} \label{eq: iterative Kron}
	Y^{(\ell)}_{ik} = Y^{{(\ell-1)}}_{ik} - {Y^{(\ell-1)}_{im_\ell}Y^{(\ell-1)}_{m_\ell k}} /{Y^{(\ell-1)}_{m_\ell m_\ell}},
\end{align}
where $i, k \in\{ 1, \cdots, n-\ell \}$,
$Y^{(0)} = Y$, $Y^{(|\beta|)} = Y^{r}$, and $m_\ell = n - \ell +1$. 
Observe that the matrix sequence $\{Y^{(\ell)}\}_{\ell =1}^{|\beta|}$ is well-defined.
Now, according to Lemma \ref{lem: assumption closed Kron}, for each $\ell \in \{1,\cdots,|\beta|\}$ matrix $Y^{(\ell)}$ satisfies Assumptions \ref{as: y_matrix sign}. Hence, $Y^r$ satisfies both Assumptions \ref{as: y_matrix sign} and \ref{as: y_matrix bound}.
%
%

Next, Let $V\in\mathbb{C}^n$ and $S\in\mathbb{C}^n$ be the vector of nodal voltages and power injections of network $\mathcal{G}^d$, respectively.
It can be shown that if the vector of nodal voltages of the reduced network $\mathcal{G}^r$ is $V[\alpha]$, then the vector of power injections in the reduced network is $S[\alpha]$.
Hence, if the voltage magnitudes in the original and Kron-reduced networks are equal, then the reactive power $Q_i$ at active nodes in the two networks are equal.
Moreover, Lemma \ref{lem: assumption closed Kron} asserts that  $B^{(\ell)}_{ii} \ge B_{ii}^{(\ell-1)}$, for all $i \in\{ 1, \cdots, n-\ell \}$.
Since this inequality holds for all $\ell \in \{1,\cdots, |\beta| \}$,
by induction, we conclude that
$
	B_{ii}^{r} \ge B_{ii},  \forall i \in \{1, \cdots, n-|\beta| \}
$
where $B_{ii}^{r}$ and $B_{ii}$ are the $i$th diagonal entries of the Kron-reduced and original admittance matrices, respectively.
Note that
$
	-Q_i -B_{ii}^{r}V_i^2 \le -Q_i -B_{ii}V_i^2.
$
%
%
Therefore, if $-Q_i -B_{ii}V_i^2 \le {d_i^2} / {2m_i}$ holds for active nodes in the original network, then $-Q_i -B_{ii}^{r}V_i^2 \le \frac{d_i^2}{2m_i}$ also holds and according to Theorem \ref{thm: stability properties}, the stability of the system is guaranteed.
\end{proof}

\newpage
\bibliographystyle{IEEEtran}
\bibliography{References}

\begin{thebibliography}{10}
\providecommand{\url}[1]{#1}
\csname url@samestyle\endcsname
\providecommand{\newblock}{\relax}
\providecommand{\bibinfo}[2]{#2}
\providecommand{\BIBentrySTDinterwordspacing}{\spaceskip=0pt\relax}
\providecommand{\BIBentryALTinterwordstretchfactor}{4}
\providecommand{\BIBentryALTinterwordspacing}{\spaceskip=\fontdimen2\font plus
\BIBentryALTinterwordstretchfactor\fontdimen3\font minus
  \fontdimen4\font\relax}
\providecommand{\BIBforeignlanguage}[2]{{%
\expandafter\ifx\csname l@#1\endcsname\relax
\typeout{** WARNING: IEEEtran.bst: No hyphenation pattern has been}%
\typeout{** loaded for the language `#1'. Using the pattern for}%
\typeout{** the default language instead.}%
\else
\language=\csname l@#1\endcsname
\fi
#2}}
\providecommand{\BIBdecl}{\relax}
\BIBdecl

\bibitem{2016-Xie-multi-microgrid}
Y.~Zhang and L.~Xie, ``A transient stability assessment framework in power
  electronic-interfaced distribution systems,'' \emph{IEEE Trans. Power Syst.},
  vol.~31, no.~6, pp. 5106--5114, 2016.

\bibitem{2019-gholami-sun-multi-microgrid}
A.~{Gholami} and X.~A. {Sun}, ``Towards resilient operation of multimicrogrids:
  An {MISOCP}-based frequency-constrained approach,'' \emph{IEEE Trans. Control
  Netw. Syst.}, vol.~6, no.~3, pp. 925--936, Sep. 2019.

\bibitem{2013-schiffer-synchronization}
J.~Schiffer, D.~Goldin, J.~Raisch, and T.~Sezi, ``Synchronization of
  droop-controlled microgrids with distributed rotational and electronic
  generation,'' in \emph{52nd IEEE Conf. Decision Control}, 2013, pp.
  2334--2339.

\bibitem{2010-zhong-synchronverters}
Q.-C. Zhong and G.~Weiss, ``Synchronverters: Inverters that mimic synchronous
  generators,'' \emph{IEEE Trans. Industrial Electronics}, vol.~58, no.~4, pp.
  1259--1267, 2010.

\bibitem{Vu-2019-reconfig-microgrids}
T.~L. Vu, H.~D. Nguyen, J.~Slotine, and K.~Turitsyn, ``Reconfigurable microgrid
  architecture for blackout prevention,''
  \emph{https://www.dropbox.com/s/o6jk9li7jxh0ibp/Manuscript.pdf?dl=0}, 2019.

\bibitem{1988-Zaborszky-phase-portrait}
J.~Zaborszky, G.~Huang, B.~Zheng, and T.-C. Leung, ``On the phase portrait of a
  class of large nonlinear dynamic systems such as the power system,''
  \emph{IEEE Trans. Autom. Control}, vol.~33, no.~1, pp. 4--15, 1988.

\bibitem{2011-Chiang-book-direct-methods}
H.-D. Chiang, \emph{Direct Methods for Stability Analysis of Electric Power
  Systems: Theoretical Foundation, BCU Methodologies, and Applications}.\hskip
  1em plus 0.5em minus 0.4em\relax John Wiley \& Sons, 2011.

\bibitem{1985-Varaiya-direct-methods}
P.~Varaiya, F.~F. Wu, and R.-L. Chen, ``Direct methods for transient stability
  analysis of power systems: Recent results,'' \emph{Proc. IEEE}, vol.~73,
  no.~12, pp. 1703--1715, 1985.

\bibitem{2013-anghel-algorithmic}
M.~Anghel, F.~Milano, and A.~Papachristodoulou, ``Algorithmic construction of
  lyapunov functions for power system stability analysis,'' \emph{IEEE Trans.
  Circuits Syst. I: Reg. Papers}, vol.~60, no.~9, pp. 2533--2546, 2013.

\bibitem{2017-anghel-decomposition}
S.~Kundu and M.~Anghel, ``A multiple-comparison-systems method for distributed
  stability analysis of large-scale nonlinear systems,'' \emph{Automatica},
  vol.~78, pp. 25--33, 2017.

\bibitem{2016-Vu-framework}
T.~L. Vu and K.~Turitsyn, ``A framework for robust assessment of power grid
  stability and resiliency,'' \emph{IEEE Trans. Autom. Control}, vol.~62,
  no.~3, pp. 1165--1177, 2016.

\bibitem{1989-Chiang-energy-functions-lossy}
H.-D. Chiang, ``Study of the existence of energy functions for power systems
  with losses,'' \emph{IEEE Trans. Circuits Syst.}, vol.~36, no.~11, pp.
  1423--1429, 1989.

\bibitem{1984-narasimhamurthi-lossy-energy-function}
N.~Narasimhamurthi, ``On the existence of energy function for power systems
  with transmission losses,'' \emph{IEEE Trans. Circuits Syst.}, vol.~31,
  no.~2, pp. 199--203, 1984.

\bibitem{1985-kwatny-energy-like-lyapunov}
H.~Kwatny, L.~Bahar, and A.~Pasrija, ``Energy-like lyapunov functions for power
  system stability analysis,'' \emph{IEEE Trans. Circuits Syst.}, vol.~32,
  no.~11, pp. 1140--1149, 1985.

\bibitem{1989-pota-lossy-lyapunov}
H.~R. Pota and P.~J. Moylan, ``A new lyapunov function for interconnected power
  systems,'' in \emph{28th IEEE Conf. Decision Control}, 1989, pp. 2181--2185.

\bibitem{1979-Athay-practical-method}
T.~Athay, R.~Podmore, and S.~Virmani, ``A practical method for the direct
  analysis of transient stability,'' \emph{IEEE Trans. Power App. Syst.},
  no.~2, pp. 573--584, 1979.

\bibitem{2005-silva-smooth-lyapunov}
F.~H. Silva, L.~F.~C. Alberto, J.~B. London, and N.~G. Bretas, ``Smooth
  perturbation on a classical energy function for lossy power system stability
  analysis,'' \emph{IEEE Trans. Circuits Syst. I: Reg. Papers}, vol.~52, no.~1,
  pp. 222--229, 2005.

\bibitem{2019-josz-occupation-measures}
C.~Josz, D.~K. Molzahn, M.~Tacchi, and S.~Sojoudi, ``Transient stability
  analysis of power systems via occupation measures,'' in \emph{IEEE Conf.
  Innov. Smart Grid Technol.}, Washington, DC, USA, Feb. 2019.

\bibitem{2005-ortega-nontrivial-transfer-conductances2}
R.~Ortega, M.~Galaz, A.~Astolfi, Y.~Sun, and T.~Shen, ``Transient stabilization
  of multimachine power systems with nontrivial transfer conductances,''
  \emph{IEEE Trans. Autom. Control}, vol.~50, no.~1, pp. 60--75, 2005.

\bibitem{1980-Skar-stability-thesis}
S.~J. Skar, ``Stability of power systems and other systems of second order
  differential equations,'' Ph.D. dissertation, Dept. Math., Iowa State Univ.,
  Iowa, USA, 1980.

\bibitem{2015-roy-graph-theoretic}
J.~A. Torres and S.~Roy, ``Graph-theoretic analysis of network input--output
  processes: Zero structure and its implications on remote feedback control,''
  \emph{Automatica}, vol.~61, pp. 73--79, 2015.

\bibitem{2016-koorehdavoudi-input-output}
K.~Koorehdavoudi, M.~Hatami, S.~Roy, V.~Venkatasubramanian, P.~Panciatici,
  F.~Xavier, and J.~A. Torres, ``Input-output characteristics of the power
  transmission network's swing dynamics,'' in \emph{55th IEEE Conf. Decision
  Control}, 2016, pp. 1846--1852.

\bibitem{2019-ebrahimzadeh-impact}
F.~Ebrahimzadeh, M.~Adeen, and F.~Milano, ``On the impact of topology on power
  system transient and frequency stability,'' in \emph{EEEICI-ICPS
  Europe}.\hskip 1em plus 0.5em minus 0.4em\relax IEEE, 2019.

\bibitem{2018-ishizaki-graph}
T.~Ishizaki, A.~Chakrabortty, and J.-I. Imura, ``Graph-theoretic analysis of
  power systems,'' \emph{Proc. IEEE}, vol. 106, no.~5, pp. 931--952, 2018.

\bibitem{2018-Dorfler-algebraic-graph-theory}
F.~D{o}rfler, J.~W. Simpson-Porco, and F.~Bullo, ``Electrical networks and
  algebraic graph theory: Models, properties, and applications,'' \emph{Proc.
  IEEE}, vol. 106, no.~5, pp. 977--1005, 2018.

\bibitem{2013-Simpson-synchronization}
J.~W. Simpson-Porco, F.~D{\"o}rfler, and F.~Bullo, ``Synchronization and power
  sharing for droop-controlled inverters in islanded microgrids,''
  \emph{Automatica}, vol.~49, no.~9, pp. 2603--2611, 2013.

\bibitem{2014-Schiffer-conditions}
J.~Schiffer, R.~Ortega, A.~Astolfi, J.~Raisch, and T.~Sezi, ``Conditions for
  stability of droop-controlled inverter-based microgrids,'' \emph{Automatica},
  vol.~50, no.~10, pp. 2457--2469, 2014.

\bibitem{2017-vorobev-framework}
P.~Vorobev, P.-H. Huang, M.~Al~Hosani, J.~L. Kirtley, and K.~Turitsyn, ``A
  framework for development of universal rules for microgrids stability and
  control,'' in \emph{56th IEEE Conf. Decision Control}.\hskip 1em plus 0.5em
  minus 0.4em\relax IEEE, 2017, pp. 5125--5130.

\bibitem{2018-Vorobev-Conference-Plug}
------, ``Towards plug-and-play microgrids,'' in \emph{44th Annu. Conf. IEEE
  Ind. Electron. Soc.}, 2018, pp. 4063--4068.

\bibitem{2019-Vorobev-Plug}
P.-H. Huang, P.~Vorobev, M.~Al~Hosani, J.~L. Kirtley, and K.~Turitsyn,
  ``Plug-and-play compliant control for inverter-based microgrids,'' \emph{IEEE
  Trans. Power Syst.}, vol.~34, no.~4, pp. 2901--2913, 2019.

\bibitem{2019-paganini-global-analysis-synchronization}
F.~Paganini and E.~Mallada, ``Global analysis of synchronization performance
  for power systems: bridging the theory-practice gap,'' \emph{IEEE Trans.
  Autom. Control}, vol.~65, no.~7, pp. 3007--3022, 2020.

\bibitem{2014-Low-NaLi-stability}
C.~Zhao, U.~Topcu, N.~Li, and S.~Low, ``Design and stability of load-side
  primary frequency control in power systems,'' \emph{IEEE Trans. Autom.
  Control}, vol.~59, no.~5, pp. 1177--1189, 2014.

\bibitem{2018-Dorfler-robust}
E.~Weitenberg, Y.~Jiang, C.~Zhao, E.~Mallada, C.~De~Persis, and F.~D{\"o}rfler,
  ``Robust decentralized secondary frequency control in power systems: Merits
  and tradeoffs,'' \emph{IEEE Trans. Autom. Control}, vol.~64, no.~10, pp.
  3967--3982, 2018.

\bibitem{2020-ortega-tool-analysis}
A.~S. Matveev, J.~E.~M. Martinez, R.~Ortega, J.~Schiffer, and A.~Pyrkin, ``A
  tool for analysis of existence of equilibria and voltage stability in power
  systems with constant power loads,'' \emph{IEEE Trans. Autom. Control}, 2020.

\bibitem{2012-Dorfler-Kuramoto}
F.~Dorfler and F.~Bullo, ``Synchronization and transient stability in power
  networks and nonuniform kuramoto oscillators,'' \emph{SIAM J. Control
  Optimiz.}, vol.~50, no.~3, pp. 1616--1642, 2012.

\bibitem{2020-fast-certificate}
A.~Gholami and X.~A. Sun, ``A fast certificate for power system small-signal
  stability,'' in \emph{59th IEEE Conf. Decision Control}, 2020, pp.
  3383--3388, \href{https://arxiv.org/abs/2008.02263}{arXiv:2008.02263}.

\bibitem{2005Braess}
D.~Braess, A.~Nagurney, and T.~Wakolbinger, ``On a paradox of traffic
  planning,'' \emph{Transportation Science}, vol.~39, no.~4, pp. 446--450,
  2005.

\bibitem{2019-Johnson-aggregate}
V.~Purba, B.~B. Johnson, M.~Rodriguez, S.~Jafarpour, F.~Bullo, and S.~V.
  Dhople, ``Reduced-order aggregate model for parallel-connected single-phase
  inverters,'' \emph{IEEE Trans. Energy Convers.}, vol.~34, no.~2, pp.
  824--837, 2019.

\bibitem{2018-Johnson-aggregate}
M.~Khan, Y.~Lin, B.~Johnson, V.~Purba, M.~Sinha, and S.~Dhople, ``A
  reduced-order aggregated model for parallel inverter systems with virtual
  oscillator control,'' in \emph{IEEE 19th Workshop Control Modeling for Power
  Electron. (COMPEL)}, 2018, pp. 1--6.

\bibitem{2008-anderson-stability}
P.~M. Anderson and A.~A. Fouad, \emph{Power System Control and
  Stability}.\hskip 1em plus 0.5em minus 0.4em\relax John Wiley \& Sons, 2008.

\bibitem{2020-Guerrero-VSG}
X.~Hou, Y.~Sun, X.~Zhang, J.~Lu, P.~Wang, and J.~M. Guerrero, ``Improvement of
  frequency regulation in vsg-based ac microgrid via adaptive virtual
  inertia,'' \emph{IEEE Trans. Power Electron.}, vol.~35, no.~2, pp.
  1589--1602, 2019.

\bibitem{matpower}
\BIBentryALTinterwordspacing
R.~D. Zimmerman and C.~E. Murillo-Sanchez. {MATPOWER}. [Online]. Available:
  \url{https://matpower.org}
\BIBentrySTDinterwordspacing

\bibitem{2015-robustness-paradox-network}
X.~Wang, Y.~Ko{\c{c}}, R.~E. Kooij, and P.~Van~Mieghem, ``A network approach
  for power grid robustness against cascading failures,'' in \emph{7th int.
  workshop on reliable networks design and modeling (RNDM)}, 2015, pp.
  208--214.

\bibitem{2017-gholami-proactive}
A.~Gholami, T.~Shekari, and S.~Grijalva, ``Proactive management of microgrids
  for resiliency enhancement: An adaptive robust approach,'' \emph{IEEE Trans.
  Sustain. Energy}, vol.~10, no.~1, pp. 470--480, 2017.

\bibitem{1989-Baran-network}
M.~E. Baran and F.~F. Wu, ``Network reconfiguration in distribution systems for
  loss reduction and load balancing,'' \emph{IEEE Trans. Power Del.}, vol.~4,
  no.~2, pp. 1401--1407, 1989.

\bibitem{2012-dorfler-kron}
F.~Dorfler and F.~Bullo, ``Kron reduction of graphs with applications to
  electrical networks,'' \emph{IEEE Trans. Circuits Syst. I: Reg. Papers},
  vol.~60, no.~1, pp. 150--163, 2012.

\end{thebibliography}

\end{document}